\documentclass[11pt]{article}
\usepackage[T1]{fontenc}
\usepackage[utf8]{inputenc}
\usepackage{lmodern}
\usepackage[letterpaper,margin=1in]{geometry}
\usepackage{microtype}
\usepackage{amsmath,amssymb,amsthm,mathtools}
\usepackage{aliascnt}
\usepackage{booktabs,array,tabularx}
\usepackage{graphicx}
\usepackage{algorithm,algpseudocode}
\usepackage{enumitem}
\usepackage{authblk}
\usepackage[backend=biber,style=alphabetic,maxbibnames=99,maxalphanames=4,giveninits=true,doi=true,url=false,isbn=false]{biblatex}
\usepackage[hidelinks]{hyperref}
\usepackage[nameinlink,noabbrev]{cleveref}
\usepackage{xcolor}
\makeatletter
\providecommand{\theHALG@line}{}
\renewcommand{\theHALG@line}{\thealgorithm.\arabic{ALG@line}}
\makeatother
\hypersetup{pdftitle={Optimal (Parallel) Spooky Pebbling on Binary Trees},pdfauthor={Mingyu Lee, Sanghyun Lee, Kabgyun Jeong},pdfsubject={Space, work, depth, and cumulative occupancy in measurement-assisted pebbling}}

\newtheorem{theorem}{Theorem}[section]
\newaliascnt{lemma}{theorem}
\newtheorem{lemma}[lemma]{Lemma}
\aliascntresetthe{lemma}
\newaliascnt{proposition}{theorem}
\newtheorem{proposition}[proposition]{Proposition}
\aliascntresetthe{proposition}
\newaliascnt{corollary}{theorem}

\aliascntresetthe{corollary}
\theoremstyle{definition}
\newaliascnt{definition}{theorem}

\aliascntresetthe{definition}
\theoremstyle{remark}
\newaliascnt{remark}{theorem}

\aliascntresetthe{remark}
\crefname{lemma}{lemma}{lemmas}
\crefname{proposition}{proposition}{propositions}
\crefname{corollary}{corollary}{corollaries}
\crefname{definition}{definition}{definitions}
\crefname{remark}{remark}{remarks}
\DeclareMathOperator{\pred}{pred}
\DeclareMathOperator{\parv}{par}

\DeclareMathOperator{\popcount}{popcount}

\newcommand{\Peb}{\mathsf{Pebble}}
\newcommand{\ket}[1]{\lvert#1\rangle}
\newcommand{\Unp}{\mathsf{Unpebble}}
\newcommand{\Ghost}{\mathsf{Ghost}}
\newcommand{\Fast}{\mathsf{SequentialPebble}}
\newcommand{\Release}{\mathsf{Release}}

\newcommand{\Sweep}{\mathsf{ParallelPebble}}
\newcommand{\Clean}{\mathsf{FastClean}}
\newcommand{\Resolve}{\mathsf{Resolve}}
\newcommand{\Contract}{\mathsf{Contract}}

\newcommand{\Area}{\mathcal A}
\newcommand{\Wseq}{W_{\mathrm{seq}}}
\newcommand{\Wpar}{W_{\mathrm{par}}}
\newcommand{\Dpar}{D_{\mathrm{par}}}

\setlist{itemsep=2pt,topsep=4pt}
\allowdisplaybreaks[1]
\floatname{algorithm}{Algorithm}
\algrenewcommand\algorithmicrequire{\textbf{Input:}}
\algrenewcommand\algorithmicensure{\textbf{Output:}}

\title{Optimal (Parallel) Spooky Pebbling on Binary Trees}

\author[1,2]{Mingyu Lee}
\author[3,2]{Sanghyun Lee}
\author[4,2]{Kabgyun Jeong}

\affil[1]{Department of Computer Science and Engineering,
Seoul National University, Seoul 08826, Republic of Korea}

\affil[2]{Team QST,
Seoul National University, Seoul 08826, Republic of Korea}

\affil[3]{Department of Mechanical Engineering,
Kyung Hee University, Yongin 17104, Republic of Korea}

\affil[4]{Institute of Computer Technology, College of Engineering,
Seoul National University, Seoul 08826, Republic of Korea}

\date{}

\begin{document}
\maketitle
\vspace{-1.4em}

\begin{abstract}

Pebble games model computations under a fixed space budget. Spooky pebbling allows quantum memory to be released by measurement, with the resulting phases corrected later. We study two-input computations with binary-tree dependencies and determine the optimal work and parallel depth for complete binary trees. Our key idea is to clean up the tree in blocks, reducing repeated recomputation of intermediate values. For the complete tree $B_h$ with $n=2^h-1$ vertices and every space budget $h+1\le s\le n$, we give an algorithm with asymptotically optimal work $\Theta(nh/\log(s+1))$. At the minimum budget $s=h+1$, this improves the $O(n\log n)$ bound of Kornerup, Sadun, and Soloveichik to $\Theta(n\log n/\log\log n)$, resolving their time-optimality question.

We also construct a parallel schedule with optimal depth
\[
 \Theta\!\left(h+\frac ns
 \max\!\left\{\frac{h}{\log(s+1)},\,1+\log^*h\right\}\right).
\]
Our lower bounds hold for every full binary tree. We also show that achieving optimal parallel depth can require asymptotically more work than minimizing work alone. Applying our schedule to the RNS point-addition trees in the public implementation of Chevignard, Fouque, and Schrottenloher reduces their Toffoli/AND gate count by $20.86\%$ for a P-224 instance and $22.66\%$ for a P-256 instance, using the same arithmetic circuits and peak workspace.

\end{abstract}

\section{Introduction}

Consider a quantum out-of-place operation $f$ with $k$ inputs:
\begin{equation}
    \ket{a_1}\cdots\ket{a_k}\ket{0} \longmapsto \ket{a_1}\cdots\ket{a_k}\ket{f(a_1,\ldots,a_k)}.
\end{equation}

This operation must act coherently on superpositions of inputs. When the output of one operation serves as an input to another, the intermediate value must be stored in a quantum register until it is used. Repeatedly combining pairs of values forms a binary tree of computations when each value is used as an input at most once. Such a structure appears when adding elliptic-curve points~\cite{CFS2026}, as we discuss in \cref{sec:cfs-case-study}.

Once an intermediate value is no longer needed, we would like to clear its register to save space. We can do this by reversing the operation that produced the value, but this requires its inputs to remain available. If their registers have already been cleared, those inputs must be computed again. Keeping intermediate values therefore uses more memory, while clearing them early may require more computation.

Pebble games describe this tradeoff between memory and computation~\cite{Bennett1989}. Each vertex represents a value, and its predecessors represent the inputs needed to compute it. A pebble indicates that the value is stored in a register. Placing a pebble corresponds to computing the value, and removing it with its inputs available corresponds to uncomputing it.

Measurement offers another way to release a quantum register. Instead of uncomputing its value, we can measure the register in the Hadamard basis and free it immediately. This reduces the quantum memory in use, but leaves a phase that must be corrected later. Suppose the measured register holds $x$, and let $\ket{\psi_x}$ denote the state of the remaining registers. A measurement outcome $m$ leaves
\begin{equation}
    \sum_x \alpha_x\ket{\psi_x}\ket{x} \longmapsto \sum_x \alpha_x(-1)^{m\cdot x}\ket{\psi_x},
\end{equation}
where $m\cdot x$ is computed modulo two and the measured register has been discarded. The factor $(-1)^{m\cdot x}$ may change the relative phases of the remaining state and therefore affect interference in subsequent quantum operations.

In the pebble game, we represent this pending phase correction by a ghost. Unlike a pebble, a ghost does not occupy a quantum register. The measurement outcome is stored classically. Correcting the phase is called exorcising the ghost. Once the required inputs are available, we can do this by recomputing $x$ and applying $Z$ to each qubit whose corresponding measurement outcome is one. We obtain
\begin{equation}
    \sum_x \alpha_x(-1)^{m\cdot x}\ket{\psi_x}\ket{x} \xrightarrow{\bigotimes_j Z^{m_j}} \sum_x \alpha_x\ket{\psi_x}\ket{x}.
\end{equation}
We can then uncompute $x$ to clear its register without leaving a phase.

This extension of the pebble game, introduced by Gidney, is called spooky pebbling~\cite{Gidney2019}. It allows us to release quantum memory before the inputs needed for uncomputation are available, at the cost of correcting the resulting phases later. A successful strategy must produce the desired output and clear all other pebbles and ghosts. Figure~\ref{fig:binary-pebbling-concept} illustrates a configuration of the spooky pebble game on a binary tree.

\begin{figure}[htbp]
    \centering
    \includegraphics[width=0.8\textwidth]{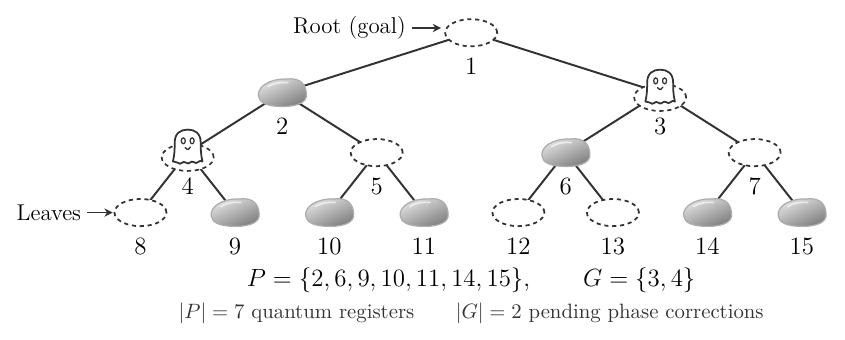}
    \caption{Illustration of spooky pebbling on a binary tree. Pebbles represent values stored in quantum registers, and ghosts represent pending phase corrections. Vertices with neither a pebble nor a ghost are empty. The configuration shown here occupies 7 quantum registers. Pebbling or unpebbling an internal vertex requires pebbles on both children. The goal is to leave only the root pebbled, with all other vertices empty.}
    \label{fig:binary-pebbling-concept}
\end{figure}

\subsection{Prior work}

Kornerup, Sadun, and Soloveichik studied sequential spooky pebbling, where one pebble is placed or removed at a time~\cite{KSS2025}. They determined the optimal tradeoff between computation time and memory use for a chain of connected nodes. For a complete binary tree with $h$ vertex levels and $n=2^h-1$ vertices, they constructed a strategy using the minimum $h+1$ pebbles and $O(n\log n)$ steps~\cite[Theorem~5.5]{KSS2025}. Their step count includes placements, uncomputations, and ghosting. The strategy first computes the root, leaving ghosts below it, and then recomputes child values to clear these ghosts. It remained open whether the $O(n\log n)$ bound can be improved, along with the question of a general time--space tradeoff on binary trees~\cite[Section~6]{KSS2025}.

On more general dependency graphs, KSS studied the hardness of minimizing pebble space. Quist and Laarman proved PSPACE-completeness results and developed SAT-based methods to find spooky pebbling strategies with small quantum space, classical space, or execution time~\cite{QL2025}. Our focus is on explicit strategies and matching bounds for binary trees.

Another way to reduce execution time is to perform independent operations in parallel. To formalize this in reversible computing, Blocki, Holman, and Lee introduced parallel reversible pebbling~\cite{BHL2022}. Combining this framework with spooky pebbling, Kahanamoku-Meyer, Ragavan, and Van Kirk (KMRV) introduced parallel spooky pebbling, which also allows registers to be released by measurement~\cite{KMRV2026}. In their model, operations are executed in synchronous rounds: the number of rounds determines the depth, while the total number of computations and uncomputations determines the work. Performing multiple operations in a single round can therefore reduce depth without increasing total work.

KMRV studied a linear chain with a fixed initial pebble and $\ell$
additional vertices. They showed that visiting the terminal vertex
and then clearing these additional vertices takes an optimal $2\ell$
rounds using $O(\log\ell)$ pebbles~\cite{KMRV2026}. They also applied
their construction to reduce the depth of arithmetic in Regev's
factoring algorithm.

These results leave the tradeoffs on binary trees open. Independent subtrees can be computed in parallel, but both child values must be available to compute or uncompute their parent. Under a fixed space budget, a schedule must decide which subtrees to process concurrently and which values to retain for cleanup. We ask what work and depth are optimal at each space budget, and whether a single strategy can minimize both.

\subsection{Main results}

Let $B_h$ be the complete binary tree of height $h\ge2$, where height counts vertices on a longest root-to-leaf path. Then $B_h$ has $n=2^h-1$ vertices. Write $\Wseq^\star(B_h,s)$ and $\Wpar^\star(B_h,s)$ for the minimum sequential and parallel work with space at most $s$, and $\Dpar^\star(B_h,s)$ for the minimum parallel depth. Also write $\lambda(x)=1+\log^*x$, where $\log^*x$ is the number of iterated binary logarithms needed to reach a value at most one.

\begin{theorem}[Informal, optimal space--work tradeoffs and parallel depth,
see \cref{thm:seq-upper,thm:parallel-upper,thm:universal-work,thm:weighted}]
\label{thm:main}
For every integer $h+1\le s\le n$,
\begin{equation}
    \Wseq^\star(B_h,s)=\Wpar^\star(B_h,s)=\Theta\!\left(\frac{nh}{\log(s+1)}\right).
    \label{eq:main-work}
\end{equation}
The optimal parallel depth is
\begin{equation}
    \Dpar^\star(B_h,s)=\Theta\!\left(h+\frac ns\max\!\left\{\frac{h}{\log(s+1)},\lambda(h)\right\}\right).
    \label{eq:main-depth}
\end{equation}
Work and depth are minimized separately. All implicit constants are absolute and uniform over the entire feasible interval.
\end{theorem}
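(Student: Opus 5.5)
The theorem combines four results, so the plan splits into a work upper bound, a work lower bound, a depth upper bound, and a depth lower bound. Each will be proved separately. Because the statement says work and depth are minimized separately, no single schedule has to achieve both.

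\textbf{Work upper bound.} We give a recursive block strategy. Fix a block height $b\approx\log(s+1)$, chosen so that a block of $2^b-1\le s$ vertices fits in the budget together with an $O(h)$ spine of pebbles. A block is a complete subtree of height $b$ whose leaves are roots of deeper subtrees.

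To compute the root of $B_h$, we process the tree level-block by level-block. Within a block we compute all block-leaf values, combine them upward, and ghost the interior by measurement. We then clean the block in one pass: we recompute its leaf values once and exorcise every interior ghost while the whole block is resident. The subtrees below are handled recursively, with the surplus space passed to the child currently being worked on.

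With this scheme each vertex is recomputed $O(1)$ times per block boundary above it. A vertex has $h/b$ such boundaries, so the total work satisfies a recurrence of the form $W(h)=2^bW(h-b)+O(2^b\cdot\text{cost})$, which solves to $O(nh/\log(s+1))$. At $s=h+1$ we have $b=\Theta(\log h)$, which yields $\Theta(n\log n/\log\log n)$. The delicate bookkeeping is keeping the space usage at most $s$ at the minimum budget $s=h+1$. There I would use the KSS spine argument: one pebble per level along the active path, plus a block's worth of space obtained by shrinking $b$ so that $2^b\le s-h+O(h)$.

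\textbf{Work lower bound.} This holds for every full binary tree. I would use a potential/charging argument. For each vertex $v$, consider the time intervals during which some register holds a value in its subtree. Space $s$ forces that at any moment the pebbled-or-needed set is a frontier of size at most $s$.

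The key claim is this: to clear (unpebble or exorcise) a vertex, both children must be pebbled. Chasing this requirement, any connected region the schedule clears in one pass has at most $s$ vertices and hence depth $O(\log s)$ in a balanced subtree. A root-to-leaf path of length $h$ must therefore be re-entered $\Omega(h/\log s)$ times. This gives each leaf, or more generally each vertex, $\Omega(h/\log s)$ computations on average.

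I expect this lower bound to be the main obstacle. Ghosts decouple release from cleanup, so the lower bound must account for cleanup work only. The first approach I would try is a Dymond--Tompa-style adversary, or a Pippenger-type induction on $W(T,s)\ge W(T_L,s)+W(T_R,s')+\dots$ that splits on when the root's children are last simultaneously pebbled.

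\textbf{Depth upper bound.} The $h$ term is the trivial path length. For the $\frac ns\cdot\frac{h}{\log s}$ term we run the block schedule with $\Theta(s)$ concurrent operations per round, which gives depth of order work divided by $s$.

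The $\frac ns\lambda(h)$ term arises when $h/\log s$ is small. In that regime the bottleneck is refilling $s$ registers across subtrees with geometric spacing. The plan is an iterated-logarithm scheduling: stage $i$ handles subtrees of height $\log^{(i)}h$ in parallel, and each of the $\log^*h$ stages costs $O(n/s)$ rounds.

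\textbf{Depth lower bound.} The bound $\max(h,\text{work}/s)$ is immediate from the work lower bound, since each round performs at most $s$ operations. The $\lambda(h)$ term needs a separate argument. I would show that any round-schedule induces, for thresholds $h_0=h$ and $h_{i+1}=\log h_i$, a phase in which $\Omega(n)$ operations occur with only $O(s)$ parallelism usable. Proving this term carefully is the second hardest step.
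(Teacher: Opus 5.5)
Your split into four bounds matches the paper's structure, as do block cleanup for the work upper bound and $\log^*$ stages of batched cleanup for the depth upper bound. However, both lower bounds are missing their key mechanism.

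\textbf{Work lower bound.} The claim \emph{a region cleared in one pass has at most $s$ vertices} is not provable as it stands. A schedule need not proceed in passes, may recompute inside a region, and may park scattered pebbles, and nothing in your sketch prevents counting the same placements at several scales. Even in $B_h$ a connected region of $s$ vertices can be a path of length $\min\{s,h\}$, so \emph{depth $O(\log s)$} is false, and the count must be by vertex mass rather than per path. The paper anchors the count on the final uncomputation times $\tau(v)$, which satisfy $\tau(u)-\tau(v)\ge\operatorname{dist}(v,u)$ for every descendant $u$ of $v$. Partition a subtree $C$ into pieces of size $\Theta(|C|/s)$. At time $\tau(\operatorname{root}(C))$ at most $s+1$ pieces hold a pebble or contain the root. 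Hence pebble-free pieces cover $3|C|/4$ vertices and must be entirely re-placed before their parents' final uncomputations. The windows $Q\times I(C,Q)$ are disjoint across the recursive partition (\cref{lem:charging}), so these counts add over $\Omega(\log n/\log(s+1))$ scales.

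\textbf{Depth lower bound, $\lambda$ term.} Your plan of $\log^*h$ phases, each with $\Omega(n)$ operations, cannot work. Summing disjoint phases would force $\Omega(n\lambda(h))$ work for every strategy. Yet at $s=\Theta(n/h)$ work $\Theta(n)$ is achievable, while depth must still be $\Omega(h\lambda(h))$. The missing idea is to charge cumulative storage $\Area$ rather than actions. Take pieces of size about $\log m/(16L)$ whose roots lie at least $\lfloor\tfrac12\log m\rfloor$ levels below $\operatorname{root}(C)$. Then $|I(C,Q)|\ge L|Q|$, so each piece is either fully recomputed or keeps a pebble for $L|Q|$ rounds. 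Summing over $\Omega(L)$ scales gives $\Pi+\Area/L\ge cnL$, and $\Pi\le\Area\le sD$ then yields $D=\Omega(nL/s)$.

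\textbf{Work upper bound.} A uniform block height of about $\log(s+1)$ is infeasible near the top when $s-h$ is small. Preparing $2^k$ boundary roots with $\Fast$ needs the $2^k-1$ retained roots plus $r-k+1$ registers for the last subtree, so $2^k+(r-k)\le s-1$. At $s=h+1$ and $r=h-1$ this forces $k=1$: there is no room for both a spine and a block of $\Theta(h)$ vertices, and your condition $2^b\le s-h+O(h)$ does not capture this. The cut must instead depend on the slack $s-1-r$, which grows as the recursion descends. After $O(\sqrt s)$ cuts (none if $s\ge2h$) the slack exceeds $\sqrt s$, and every later cut removes $\Omega(\log s)$ levels. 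This still gives $O(1+h/\log(s+1))$ blocks per leaf (\cref{lem:slack}).

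\textbf{Depth upper bound.} \emph{Depth of order work divided by $s$} does not follow from running the block schedule with $\Theta(s)$ concurrent operations. $\Fast$ is depth-first, and when the budget is below $2(q+1)$ no two height-$q$ jobs fit at once, so all parallelism must come from inside a single root computation. You need a root computation of depth $O(q+2^q/b)$ at every budget $b\ge q+1$, as in \cref{lem:sweep}. That construction uses levelwise bottom subtrees, a binary-counter stack of retained roots, and a popcount estimate when $q+1\le b<2q$. Without it, your schedule at $s=h+1$ has depth of the order of its work, $\Theta(n\log n/\log\log n)$, instead of $\Theta(n/\log\log n)$. Your $\log^*$-stage sketch for the $\lambda$ term is in the spirit of $\Resolve$, provided each height-$t$ job receives $\Theta(2^t/t)$ registers so that one block takes $O(t)$ rounds.
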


At the minimum budget $s=h+1$, a single parallel construction attains
\begin{equation}
    W=\Theta\!\left(\frac{n\log n}{\log\log n}\right), \qquad D=\Theta\!\left(\frac n{\log\log n}\right).
    \label{eq:min-space}
\end{equation}
The work bound improves the KSS bound by a factor of $\log\log n$ and resolves their question about time-optimality on complete binary trees. At space $s=\Theta(n)$, linear work and depth $\Theta(h)$ are attainable.

Our algorithms reduce repeated reconstruction during cleanup through
\emph{block cleanup}. We retain the values along a block's boundary while
computing and uncomputing the intermediate vertices above it. For parallel
execution, we also overlap independent boundary computations and the
cleanup of disjoint subtrees.

For an associative operation, we can change the parenthesization
without changing the input order or the result.
Each parenthesization gives an evaluation tree.
We therefore ask which tree shape minimizes work and parallel
depth at a common space budget.

A full binary tree has exactly two children at each internal vertex.
Our lower bounds hold for every such tree, including unbalanced trees,
by accounting for the recomputation and storage required during cleanup.
Our complete-tree constructions attain these bounds, yielding the
following optimality result at a common space budget.

\begin{theorem}[Optimality over tree shapes]
\label{thm:main-shape}
Let $N=2^{h-1}$, and let $\mathcal T_N$ contain all ordered full binary trees with $N$ leaves. At every common budget $h+1\le s\le2N-1$, a complete balanced tree minimizes sequential work, parallel work, and parallel depth up to constant factors over $\mathcal T_N$.
\end{theorem}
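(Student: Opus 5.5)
This theorem follows by comparing the universal lower bounds with the complete-tree upper bounds. For every tree $T\in\mathcal T_N$ and budget $s$, we need three inequalities: $\Wseq^\star(T,s)$, $\Wpar^\star(T,s)$ and $\Dpar^\star(T,s)$ are bounded below by constant multiples of the corresponding quantities for $B_h$. Note that $B_h$ has exactly $N=2^{h-1}$ leaves and $2N-1=n$ vertices, so it belongs to $\mathcal T_N$. The budget interval $h+1\le s\le 2N-1$ is exactly the interval of \cref{thm:main}. On $B_h$, \cref{thm:seq-upper,thm:parallel-upper} supply the upper bounds $O(nh/\log(s+1))$ for work and $O\bigl(h+\frac ns\max\{h/\log(s+1),\lambda(h)\}\bigr)$ for depth. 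The whole task is therefore to show that the lower bounds of \cref{thm:universal-work,thm:weighted} hold for every full binary tree with $N$ leaves, in a form no smaller than these expressions.

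I would take the lower bounds as stated for arbitrary full binary trees, as the introduction promises. Their natural parameters are the number of vertices, which is $2N-1=n$ for every $T\in\mathcal T_N$, and some depth-weighted quantity. I expect the work bound to take the shape $\Omega\bigl(\sum_v \mathrm{depth}(v)/\log(s+1)\bigr)$, or to be phrased through leaf depths.

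The key step is a shape-independent inequality. For any full binary tree with $N$ leaves, Kraft's equality $\sum_\ell 2^{-d_\ell}=1$ together with Jensen's inequality gives $\sum_\ell d_\ell\ge N\log_2 N=N(h-1)$. Summing over internal vertices as well then gives total depth $\Omega(nh)$. This makes the weighted work bound at least $\Omega(nh/\log(s+1))$, matching $B_h$.

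For depth I would use three ingredients:
\begin{itemize}
\item the trivial bound that any schedule must traverse a root-to-leaf path of length at least $h$, since every tree with $N$ leaves has height at least $\log_2 N+1=h$;
\item the work-over-space bound $D\ge W/s$, since each round performs at most $s$ operations when at most $s$ pebbles are present;
\item the $\frac ns\lambda(h)$ term from \cref{thm:weighted}.
\end{itemize}

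The main obstacle is that last $\lambda(h)$ term. Its lower bound for general trees must not degrade on unbalanced shapes. I would check that \cref{thm:weighted} is stated for every full binary tree with parameters depending only on $n$ and the height. If instead its bound depends on the actual height $h_T\ge h$, I would use monotonicity of $\lambda$ and of $h/\log(s+1)$ in the height to reduce to $h$. A secondary point is that when $h_T>s-1$ the budget is infeasible for $T$. In that case $\Wseq^\star(T,s)=\infty$ by convention and the inequality holds trivially.

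Combining the lower and upper bounds, the three quantities for $B_h$ are within absolute constants of the minimum over $\mathcal T_N$.
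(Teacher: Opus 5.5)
Your plan is the paper's proof. Every $T\in\mathcal T_N$ has $n=2^h-1$ vertices and $H(T)\ge h$, and \cref{thm:universal-work} and \cref{eq:universal-depth} depend only on $n$ and $H(T)$, so they already give lower bounds of the same order as the $B_h$ upper bounds in \cref{thm:main}; your Kraft--Jensen step is therefore unnecessary (and the depth-weighted form you guessed is actually false, e.g.\ a caterpillar at $s=n$ needs only $O(n)$ work but has total depth $\Theta(n^2)$). Also drop the remark that $H(T)>s-1$ forces infeasibility: it is false in general, since a caterpillar can be handled with $O(1)$ pebbles at any height, and it is not needed, because infeasible trees have value $+\infty$ while any successful strategy is covered by the lower bounds.
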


Work and depth can require different strategies. At $s=\Theta(n/h)$, the minimum work is $\Theta(n)$, while the minimum depth is $\Theta(h\lambda(h))$. Every strategy attaining this depth performs $\Omega(n\lambda(h))$ work, and our construction attains that work bound. Thus a strategy that minimizes depth in this regime must do asymptotically more work than one that minimizes work alone.

We also apply our cleanup strategy to a fixed point-addition tree in the public implementation of Chevignard, Fouque, and Schrottenloher~\cite{CFS2026}. Their computation uses a residue number system (RNS), evaluating the point-addition formulas modulo several small primes. With the same tree and arithmetic circuits, our strategy reduces per-prime Toffoli/AND gate count by $20.86\%$ for a P-224 instance and $22.66\%$ for the primary P-256 instance, with unchanged peak workspace. These measurements use the costs of the individual circuits rather than unit costs for tree operations. \Cref{sec:cfs-case-study} gives the comparison.

\section{Pebbling models}
\label{sec:models}

\subsection{Fixed dependency trees}

A rooted full binary tree $T=(V,E)$ has either zero or two children at each vertex. Edges point from children to their parent, so the root $r$ is the unique sink. We write $\pred(v)$ for the children of $v$, and $\parv(v)$ for its parent, when it exists. The height $H(T)$ is the number of vertices on a longest leaf-to-root path. We fix the dependency tree $T$. The complete binary tree $B_h$ has
$2^h-1$ vertices and $2^{h-1}$ leaves.
All logarithms are binary unless stated otherwise.

An internal vertex $v$ computes a function $f_v$ from its two child values using an out-of-place unitary $U_v$ satisfying
\[
U_v\ket{a}\ket{b}\ket{0}=\ket{a}\ket{b}\ket{f_v(a,b)}.
\]
Applying $U_v^\dagger$ uncomputes this value when both child values are available. Leaf values have no dependencies and can be loaded and erased at any time during the computation. The functions $f_v$ may differ between vertices and need not be associative or invertible.
When every internal vertex applies the same associative operation,
we may choose the evaluation tree before the computation while
preserving the leaf order. Each strategy then follows that fixed tree.

\subsection{Configurations and actions}

A configuration is a pair $(P,G)$ of disjoint subsets of $V$. Vertices in $P$ have pebbles, representing values stored in quantum registers. Vertices in $G$ have ghosts, representing pending phase corrections. All other vertices are empty. We allow three actions:
\[
\begin{array}{lllc}
\text{Action} & \text{Condition} & \text{New configuration} & \text{Work} \\[3pt]
\Peb(v) & v\notin P,\ \pred(v)\subseteq P & (P\cup\{v\},\ G\setminus\{v\}) & 1 \\[3pt]
\Unp(v) & \{v\}\cup\pred(v)\subseteq P & (P\setminus\{v\},\ G) & 1 \\[3pt]
\Ghost(v) & v\in P & (P\setminus\{v\},\ G\cup\{v\}) & 0
\end{array}
\]
When $v\in G$, $\Peb(v)$ also applies the single-qubit phase corrections specified by the stored measurement outcome, clearing the ghost.

The initial configuration is $(\varnothing,\varnothing)$, and the goal is $(\{r\},\varnothing)$: retain the root value and clear every other pebble and every ghost.

\subsection{Execution and cost metrics}
\label{sec:work-counts}

A sequential strategy executes one action at a time. Each $\Peb$, $\Unp$, or $\Ghost$ action is one step.

A parallel strategy executes actions in rounds with two phases, following the execution rule of KMRV~\cite[Section~2.1]{KMRV2026}. In the first phase, $\Peb$ and $\Unp$ actions may run simultaneously if their active sets
\begin{equation}
A_v=\{v\}\cup\pred(v)
\label{eq:active-set}
\end{equation}
are pairwise disjoint. Their preconditions are checked at the beginning of that phase. In the second phase, $\Ghost$ may be applied to any set of currently pebbled vertices.

Parallel depth is the number $D$ of rounds. When operations on disjoint logical registers can run concurrently without additional hardware constraints, parallel depth models execution time more directly than the sequential step count.

Each pebble occupies one register, which may contain multiple qubits. We assume equal register widths and measure space by counting pebbles, including those on leaves. Ghosts occupy no quantum space. Sequential space is the maximum number of pebbles present at any step. For a parallel strategy $\mathcal P$, let $P_t$ be the pebble set after round $t$, and let $P_t^+$ be its pebble set after the first phase, before ghosting. We count every register occupied during a round:
\begin{equation}
U_t=P_{t-1}\cup P_t^+, \qquad
s(\mathcal P)=\max_{1\le t\le D}|U_t|, \qquad
\Area(\mathcal P)=\sum_{t=1}^D|U_t|.
\label{eq:parallel-resources}
\end{equation}
Here $\Area$ is cumulative storage. External input registers and temporary registers used inside an action are accounted for separately in circuit resource estimates. Classical storage and control are unrestricted.

Work counts function computations and uncomputations:
$W=\#\Peb+\#\Unp$.
$\Ghost$ releases a register by measurement without evaluating a vertex function, so it contributes no work.

Let $\Pi$ denote the total number of pebble placements in a complete strategy. Every placed pebble except the final root must eventually be removed by $\Unp$ or $\Ghost$, so there are $\Pi-1$ removals. The total action count is therefore $N_{\mathrm{step}}=2\Pi-1$, giving
\begin{equation}
W\le N_{\mathrm{step}}<2W.
\end{equation}
KSS~\cite{KSS2025} count ghosting as a step, whereas KMRV~\cite{KMRV2026} count only computations and uncomputations. The two conventions differ by at most a factor of two, so our asymptotic work bounds are unchanged.

For a dependency tree $T$ and an integer space budget $s$, let $\Wseq^\star(T,s)$ and $\Wpar^\star(T,s)$ denote the minimum sequential and parallel work, and let $\Dpar^\star(T,s)$ denote the minimum parallel depth. If no valid strategy exists within space $s$, the corresponding value is $+\infty$.

For complete binary trees $B_h$ with $h\ge2$, computing the root requires at least $h+1$ pebbles even when measurement-assisted register release is permitted~\cite{Cook1973,KSS2025}. A budget of $n=2^h-1$ pebbles allows the entire tree to be held in memory. We therefore study the feasible interval
\begin{equation}
h+1\le s\le2^h-1.
\end{equation}

Any parallel strategy can be serialized by executing each round's actions in order: first $\Unp$, then $\Peb$, and finally $\Ghost$. Pairwise disjoint active sets preserve the preconditions, and every intermediate pebble configuration stays within $U_t$. Conversely, each sequential step can be executed as a single-action parallel round. Both conversions preserve work and respect the same space budget, so
\begin{equation}
\Wseq^\star(T,s)=\Wpar^\star(T,s).
\label{eq:serialize}
\end{equation}

\section{Our algorithms}
\label{sec:main-algorithms}

The computation has two parts: compute the root value from the bottom, then clear every other vertex while retaining the root. With space $s=n$, we can hold all vertex values at once. Pebbling the tree bottom-up and unpebbling every non-root vertex top-down uses $n+(n-1)=2n-1$ work. Under a smaller space budget, values must be released before the cleanup that needs them, and recovering those values requires additional work.

The bottom-up \texttt{fast\_pebble} routine of KSS, which we call $\Fast$, computes the root with $O(n)$ work using the minimum feasible $h+1$ pebbles, but leaves ghosts below it. KSS gave an $O(n\log n)$-step strategy to clear these ghosts at the same minimum budget~\cite[Theorem~5.5]{KSS2025}. The additional work comes from reconstructing child values during cleanup. We reduce this recomputation through block cleanup, introduced in
\cref{sec:block-cleanup}. For parallel execution, we also overlap
independent computations and cleanup jobs to reduce depth.
Dividing the space budget among jobs can require additional
recomputation, leading to the work--depth tradeoff examined
in \cref{sec:separation}.

\subsection{Bottom-up root calculation}
\label{sec:fast-root}

For a complete subtree $T$ rooted at $v$ with $H(T)=r$, $\Fast(T)$ computes its children recursively, pebbles $v$, and releases the children afterward. An internal child is ghosted because its own children have already been released. For a pebbled vertex $u$, let $\Release(u)$ mean $\Unp(u)$ if $u$ is a leaf, and $\Ghost(u)$ otherwise.

\begin{algorithm}[H]
\caption{Bottom-up root computation $\Fast(T)$}
\label{alg:fast-root}
\begin{algorithmic}[1]
\Require A complete subtree $T$ rooted at $v$, with every vertex initially empty or ghosted.
\Ensure The root $v$ is pebbled. Every other internal vertex of $T$ is ghosted, and every other leaf is empty.
\If{$T$ consists of one vertex}
 \State $\Peb(v)$; \Return
\EndIf
\State Run $\Fast(T_L)$ and $\Fast(T_R)$ on the child subtrees
\State $\Peb(v)$
\For{each child root $u$ of $v$}
\State $\Release(u)$
\EndFor
\end{algorithmic}
\end{algorithm}

Every vertex is pebbled once, and every leaf other than the root is unpebbled once. Thus $\Fast(T)$ uses $\Theta(2^r)$ work. From an empty tree, this is optimal up to constants for root computation because every vertex must be computed at least once. Space usage peaks at $r+1$ pebbles for $r\ge2$, or one pebble for a leaf.

\subsection{Block cleanup}
\label{sec:block-cleanup}

Our key idea is to clean up the tree in \emph{blocks}. Starting at a subtree's root, we extend a block downward as far as the space budget permits. Its bottom vertices form the boundary $F$. Holding their values lets us compute and uncompute the vertices above $F$ without repeatedly reconstructing intermediate values. After releasing $F$, we recursively clean the subtrees rooted there.

For a complete subtree of height $r$, \cref{alg:boundary} constructs a balanced boundary of $m$ vertices, where $2\le m\le2^{r-1}$. Its vertices lie at depths $\lfloor\log m\rfloor$ or $\lceil\log m\rceil$. We prepare their values with $\Fast$, processing the tallest boundary subtrees first.

\begin{algorithm}[H]
\caption{Constructing a balanced boundary $\mathsf{Boundary}(T,m)$}
\label{alg:boundary}
\begin{algorithmic}[1]
\Require A complete subtree $T$ rooted at $v$ with $H(T)=r$, and a desired boundary size $m$ satisfying $1\le m\le2^{r-1}$.
\Ensure A set $F$ of exactly $m$ vertices forming a balanced boundary of $T$.
\If{$m=1$}
 \State \Return $\{v\}$
\EndIf
\State $F_L\gets\mathsf{Boundary}(T_L,\lceil m/2\rceil)$
\State $F_R\gets\mathsf{Boundary}(T_R,\lfloor m/2\rfloor)$
\State \Return $F_L\cup F_R$
\end{algorithmic}
\end{algorithm}

The one-block operation $\Contract(T,m,\rho)$ clears the vertices above $F$, retaining the root when $\rho=1$, and returns $F$ for recursive cleanup. Its cost determines how large a boundary the budget permits.

\begin{algorithm}[H]
\caption{One-block cleanup $\Contract(T,m,\rho)$}
\label{alg:block-contract}
\begin{algorithmic}[1]
\Require A complete subtree $T$ with $H(T)=r\ge2$. Its leaves are empty, and each internal vertex is either empty or ghosted.
\Require A boundary size $m$ satisfying $2\le m\le2^{r-1}$, and a flag $\rho\in\{0,1\}$ indicating whether to retain the root of $T$.
\Ensure The boundary $F$ is returned, and all ghosts in the block above $F$ are exorcised. Within $T$, only its root remains pebbled when $\rho=1$, and no pebbles remain there when $\rho=0$. Any remaining ghosts lie in the subtrees rooted at $F$.
\State $F\gets\mathsf{Boundary}(T,m)$
\State Compute and retain $F$ using $\Fast$, tallest subtrees first
\State Pebble the vertices above $F$ bottom-up
\State Unpebble them top-down, except the root when $\rho=1$
\State Apply $\Release(u)$ to each $u\in F$
\State \Return $F$
\end{algorithmic}
\end{algorithm}

\begin{lemma}[Cost of one-block cleanup]
\label{lem:contract}
For $r\ge2$ and $2\le m\le2^{r-1}$, $\Contract(T,m,0)$ has peak space and work
\begin{align}
 S(r,m)&=\max\{2m-1,\ r+m-\lceil\log m\rceil\}, \label{eq:boundary-space}\\
 K(r,m)&=3\cdot2^{r-1}+m-2. \label{eq:block-work}
\end{align}

$\Contract(T,m,1)$ retains the root, reducing the work by one without changing the peak space.
\end{lemma}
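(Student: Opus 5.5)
The plan is to verify both formulas by direct accounting over the three phases of $\Contract$: preparing $F$ with $\Fast$, the bottom-up/top-down pass over the block above $F$, and the release of $F$. Write $m=2^q+t$ with $q=\lfloor\log m\rfloor$ and $0\le t<2^q$. Unrolling $\mathsf{Boundary}$ (\cref{alg:boundary}) shows that $F$ consists of $2^q-t$ vertices at depth $q$ and $2t$ vertices at depth $q+1$ (depth measured from the root of $T$). A boundary vertex at depth $d$ roots a complete subtree of height $r-d\ge1$. The block above $F$ is a full binary tree with $m$ leaves (the vertices of $F$), so it has $m-1$ internal vertices.

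\emph{Work.} I would count each vertex's contributions separately. Every vertex at or below $F$ is pebbled exactly once during the $\Fast$ calls, which gives $(2^r-1)-(m-1)$ placements. Every leaf of $T$ is unpebbled exactly once, for $2^{r-1}$ unpebbles in total. If the leaf lies strictly below $F$, this happens inside $\Fast$ via $\Release$ (\cref{alg:fast-root}). If the leaf is itself in $F$, it happens in the final $\Release$ step. Internal vertices of $F$ and below are only ghosted, which costs no work. The $m-1$ vertices above $F$ are each pebbled once and unpebbled once, contributing $2(m-1)$. Summing gives $2^r-m+2^{r-1}+2m-2=3\cdot2^{r-1}+m-2$. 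When $\rho=1$ the root is not unpebbled, which saves exactly one unit of work. The sequence of configurations is otherwise identical, so the peak is unchanged.

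\emph{Space.} In the preparation phase, boundary subtrees are processed tallest first, so all depth-$q$ vertices come before all depth-$(q+1)$ vertices. While the $i$-th subtree, of height $r_i$, is being built, the $i-1$ earlier boundary values are held. By the analysis after \cref{alg:fast-root}, the peak at that moment is $(i-1)+r_i+1$ if $r_i\ge2$, and $i$ if $r_i=1$. Within each height class the peak grows with $i$, so only the last subtree of each class matters. For the last depth-$q$ subtree this gives $(2^q-t)+r-q$. For the last subtree overall it gives $m+r-\lceil\log m\rceil$: when $t\ge1$ this is $m+r-q-1$, and when $t=0$ the two candidates coincide. Since $2^q+t-1\ge2^q-t$ whenever $t\ge1$, the maximum over the preparation phase is $r+m-\lceil\log m\rceil$.

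In the second phase, pebbling bottom-up keeps all of $F$ together with every computed block vertex. The peak is reached exactly when all $2m-1$ block vertices are pebbled. The top-down unpebbling and the releases of $F$ only decrease the count, so the overall peak is $\max\{2m-1,\,r+m-\lceil\log m\rceil\}$. Both values are actually attained, so this is an equality and not merely an upper bound.

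The step I expect to need the most care is the exactness of the space bound in the degenerate case where boundary vertices are leaves ($r_i=1$). There the expression $(i-1)+r_i+1$ overcounts by one, so the true preparation peak can be $m$ rather than $r+m-\lceil\log m\rceil=m+1$. I would resolve this by noting that $m\ge2$ gives $2m-1\ge m+1$, so the first term of the maximum dominates and the stated formula remains exact. A second point to check is that $\Fast$ behaves correctly when run on subtrees whose internal vertices are initially ghosted. This holds because $\Peb$ on a ghosted vertex simply exorcises the ghost, and that observation also justifies the postcondition that the block above $F$ ends up ghost-free.
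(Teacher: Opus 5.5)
Your proposal is correct and follows the paper's direct accounting. The work count is the paper's $(2^r-1)+2^{r-1}+(m-1)$ regrouped, and the space bound compares the preparation peak (including the leaf-boundary case where $2m-1$ dominates) with the top-block peak $2m-1$. The only difference is bookkeeping: you write $m=2^q+t$ and compare the last subtree of each height class explicitly, whereas the paper bounds every taller computation by $m+q-1$ using the fact that a balanced boundary has at least two shortest subtrees.
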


\begin{proof}
The boundary $F$ has $m$ vertices, and the block above it has $m-1$ vertices. At the end of the upward pebbling pass, all $2m-1$ vertices are pebbled.

The other possible space peak occurs while preparing $F$. Put $q=r-\lceil\log m\rceil$. Its boundary subtrees have height $q$ or $q+1$, and we compute the taller ones first. A balanced boundary has at least two height-$q$ subtrees. Thus a taller computation starts with at most $m-3$ roots retained and uses at most $m+q-1$ pebbles. The final height-$q$ computation starts with $m-1$ roots retained and uses $m+q$ pebbles if $q\ge2$, or $m$ if $q=1$. Together with the top-block peak, this gives \cref{eq:boundary-space}.

For work, each of the $2^r-1$ vertices is pebbled once. The $2^{r-1}$ leaves are unpebbled once, as are the $m-1$ vertices above $F$. All other removals are ghostings. Since work counts only pebbling and unpebbling,
\[
 K(r,m)=(2^r-1)+2^{r-1}+(m-1).
\]
The top vertices are unpebbled while their children are present. Keeping the output root skips one unpebbling.
\end{proof}

For an equal-depth boundary with $b$ registers allocated to the current block, the space conditions become
\begin{equation}
 m+q\le b,\qquad 2m-1\le b,\qquad m=2^k,\quad q=r-k.
 \label{eq:feasible-cut}
\end{equation}

\subsection{Sequential work analysis}
\label{sec:sequential}

We now combine one-block cleanup into a recursive algorithm and bound its total work. The matching lower bound is proved in \cref{sec:lower-bounds}.

At a height-$r$ cleanup call with budget $b$, choose the largest $m$ satisfying $S(r,m)\le b$. Since $S(r,m)$ is nondecreasing in $m$, we find this value by binary search. The following proposition justifies this rule within the balanced recursive scheme using $\Fast$.

\begin{proposition}[Boundary choice for $\Clean$]
\label{prop:greedy-boundary}
For complete trees with unit action costs, the largest feasible boundary minimizes work among the balanced recursive contractions defined in \cref{sec:block-cleanup}, using $\Fast$ at a fixed cleanup budget $b$. The same rule applies to the output-producing block with budget $s$ and cleanup budget $s-1$.
\end{proposition}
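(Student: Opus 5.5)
We will set up the total work of a balanced recursive contraction as a recursion and show that it is monotone nonincreasing in the boundary size chosen at each call. Let $C_b(r)$ denote the minimum work, among balanced recursive contractions using $\Fast$ with cleanup budget $b$, needed to clear a height-$r$ complete subtree whose leaves are empty and whose internal vertices are empty or ghosted. By \cref{lem:contract}, choosing boundary size $m$ at this call costs $K(r,m)$. It then leaves subtrees rooted at $F$: there are $m_1=2^{\lceil\log m\rceil}-m$ of them of height $q+1$ and $m_0=2m-2^{\lceil\log m\rceil}$ of height $q$, where $q=r-\lceil\log m\rceil$. Those of height one are already cleared by $\Release$. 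Hence
\[
C_b(r)=\min_{2\le m,\ S(r,m)\le b}\Bigl(3\cdot2^{r-1}+m-2+m_1C_b(q+1)+m_0C_b(q)\Bigr),
\]
with $C_b(1)=0$ and with $m=2^{r-1}$ always allowed once $S(r,2^{r-1})\le b$. For the output-producing block we replace $K$ by $K-1$ and use budget $s$ at the top call. We then pass to $C_{s-1}$ in the recursive calls, since the retained root occupies one register. So the same argument applies with the top call separated out, which handles the second sentence of the proposition.

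The key step is an exchange argument. We compare the choice $m$ with $m+1$, assuming $S(r,m+1)\le b$. Going from $m$ to $m+1$ in the balanced boundary splits one boundary vertex $u$ of the largest current height $q'$ (either $q+1$, or $q$ when $m$ is a power of two) into its two children. The top-block term $K$ increases by exactly one. The subtree term changes from $C_b(q')$ to $2C_b(q'-1)$. It therefore suffices to prove the inequality
\[
C_b(q') \ge 2C_b(q'-1)+1 \qquad\text{for } q'\ge2.
\]
We would prove this by induction on $q'$. Take an optimal contraction of a height-$(q'-1)$ tree with boundary $m'$, and mirror it on both halves of a height-$q'$ tree using boundary $2m'$. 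The top block is then the two smaller top blocks plus the root, and its space satisfies $S(q',2m')\le S(q'-1,m')+\ldots$, which we need to check stays within $b$. The work of the mirrored contraction equals $2\,K(q'-1,m')+\ldots$. Comparing via the explicit formula, $K(q',2m')=3\cdot2^{q'-1}+2m'-2$ versus $2K(q'-1,m')=3\cdot2^{q'-1}+2m'-4$, gives the required inequality in the form $C_b(q')\le$ this bound. The inequality we actually want goes in the other direction, however. So instead we would prove superadditivity directly: any height-$q'$ contraction with boundary $m$ induces contractions of its two child subtrees with boundaries $\lceil m/2\rceil,\lfloor m/2\rfloor$ at no larger space. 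Indeed, $S$ is monotone in $r$ and $m$, which we check from \cref{eq:boundary-space}. This gives $C_b(q')\ge 2C_b(q'-1)+(K(q',m)-K(q'-1,\lceil m/2\rceil)-K(q'-1,\lfloor m/2\rfloor))$, and the last difference equals $2\ge1$.

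With the exchange inequality in hand, the cost is nonincreasing along $m\mapsto m+1$ throughout the feasible range. Since $S(r,m)$ is nondecreasing in $m$, the feasible set is an initial interval. Its largest element is therefore optimal at every call, and by induction on $r$ the greedy rule is optimal for the whole recursion.

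The main obstacle is the superadditivity step for uneven splits. When $m$ is odd, or the budget forces a boundary near the cap, the restricted sub-contractions have boundaries $\lceil m/2\rceil$ and $\lfloor m/2\rfloor$. We must verify that these satisfy $S(q'-1,\cdot)\le b$ and that the balanced shapes nest correctly, so that the children of a balanced boundary are balanced. The latter follows from the recursive definition in \cref{alg:boundary}, and the former needs a short case check on the two terms of $S$.
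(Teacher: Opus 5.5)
Your final argument is the paper's proof. Splitting a shallowest boundary vertex raises the block work by one and replaces $C_b(q')$ by $2C_b(q'-1)$, and restricting an optimal height-$q'$ contraction to its two child subtrees gives $C_b(q')\ge 2C_b(q'-1)+2$, since $K(q',m)$ exceeds the two child block costs by exactly $2$. The output block is handled the same way, via $K-1$ and descendant budget $s-1$, and you can drop the mirroring attempt.

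One case is missing. When $m\in\{2,3\}$, a child receives boundary size one, which is not a contraction in the family. There, as in the paper, that child's share is $\Fast$ on the child plus a recursive cleanup of cost $C_b(q'-1)$, so it still contributes at least $C_b(q'-1)$.
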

\begin{proof}
Let $C_r$ be the minimum cleanup work in this family at budget $b$,
starting with internal ghosts and empty leaves, so $C_1=0$.
Restrict a strategy to either child subtree, still within budget $b$.
It either performs block cleanup there directly or first computes and
releases that child's root.
In both cases the child contributes at least $C_{r-1}$.
The parent is also pebbled and unpebbled, giving
\[
 C_r\ge2C_{r-1}+2.
\]

Expanding a shallowest boundary vertex of height $q\ge2$ into its
children adds one unit to the current block's work by \cref{eq:block-work},
but replaces the later cost $C_q$ by $2C_{q-1}$.
The total change is
\[
1+2C_{q-1}-C_q\le-1.
\]
Since \cref{alg:boundary} obtains successive sizes by these expansions
and $S(r,m)$ is nondecreasing, every expansion up to the largest feasible
boundary reduces total work. For the output block, use the descendant
budget $b=s-1$ in $C_q$. Retaining the root saves one unpebbling for every
boundary choice and leaves the comparison unchanged.
\end{proof}

The procedure $\Clean$ combines one-block cleanup with recursive calls on the boundary subtrees. Its initial call is $\Clean(B_h,s,1)$, with $\rho=1$ to retain the root.

\begin{algorithm}[H]
\caption{Complete-tree block cleanup $\Clean(T,b,\rho)$}
\label{alg:seq-clean}
\begin{algorithmic}[1]
\Require A complete subtree $T$ with $H(T)=r$, no pebbles, and no ghosts on leaves.
\Require A space budget $b\ge r+1$ for $r\ge2$ ($b\ge1$ for a leaf), and a flag $\rho\in\{0,1\}$ indicating whether to retain the root of $T$.
\Ensure Exorcise $T$ and clear all its pebbles, retaining the root when $\rho=1$.
\If{$r=1$}
 \State Pebble the root if $\rho=1$; \Return
\EndIf
\State Choose the largest $2\le m\le2^{r-1}$ with $S(r,m)\le b$
\State $F\gets\Contract(T,m,\rho)$
\State For each $v\in F$, run $\Clean(T_v,b-\rho,0)$ sequentially
\end{algorithmic}
\end{algorithm}

Each recursive call starts with internal ghosts and empty leaves. Since each boundary subtree has height at most $r-1$, the remaining budget $b-\rho$ is feasible. For analysis, we write $b=s-1$ for the budget available below the output.

Each block costs constant work per leaf, so we bound how many blocks contain a given leaf. Because a balanced boundary may span two adjacent depths, we introduce $k_b(r)$, the largest feasible equal-depth cut depth, to bound the height decrease along every branch.

\begin{lemma}[Stage bound for block cleanup]
\label{lem:slack}
Suppose $b\ge h$ and $r_0\le h-1$. At each height $r$ with $2^r-1>b$, let $k_b(r)$ be the largest cut depth satisfying \cref{eq:feasible-cut}. Any sequence that decreases $r$ by at least $k_b(r)$ at each stage reaches $2^r-1\le b$ in $O(1+h/\log(b+1))$ stages.
\end{lemma}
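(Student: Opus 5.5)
We first lower-bound $k_b(r)$ at every stage where $2^r-1>b$. Set $L=\lfloor\log(b+1)\rfloor\ge1$, and take the candidate $k=\min\{L-1,\ \lfloor\log((b+1)/2)\rfloor\}$, adjusted by a constant if needed. The condition $2m-1\le b$ with $m=2^k$ holds whenever $2^{k+1}\le b+1$, that is, $k\le L-1$. The condition $m+q=2^k+r-k\le b$ is the binding one, because $r$ can be as large as $h-1\le b-1$.

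We treat two regimes. In the first, $r\le b/2$ (roughly). Then any $k$ with $2^k\le b/2$ satisfies $2^k+r-k\le b$, so $k_b(r)\ge\lfloor\log(b/2)\rfloor\ge L-2$. Whenever $L\ge3$ this is $\Omega(\log(b+1))$.

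In the second regime, $r>b/2$. Since $r\le h-1<b$, we have $b<2h$. Here the choice $k=1$ (so $m=2$) always works, because $2+r-1=r+1\le h\le b$ and $3\le b$. So each stage decreases $r$ by at least $1$. More strongly, a $k$ with $2^k-k\le b-r$ is feasible, which gives $k\ge\log(b-r)$ roughly.

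The main obstacle is the count of stages in the second regime. While $r>b/2$, the height drops by at least one per stage. Starting from $r_0\le h-1$, reaching $r\le b/2$ therefore takes at most $h-b/2$ stages. Since $b\ge h$, this is at most $h/2$, which is not $O(h/\log b)$ in general. To fix this, I would use the bound $k\ge\lfloor\log(b-r+1)\rfloor$. Then the slack $g=b-r$ evolves as $g\mapsto g+\lfloor\log(g+1)\rfloor$ per stage, and starts at $g_0\ge b-h+1\ge1$. The slack doubles within $O(g/\log g)$ stages once it is at size $g$. Summing over dyadic scales $g=2^j$ up to $b/2$ gives
\[
\sum_j O\!\left(\frac{2^j}{j}\right)=O\!\left(\frac{b}{\log b}\right)=O\!\left(\frac{h}{\log(b+1)}\right),
\]
where the last step uses $b<2h$ in this regime. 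The small scales $j=O(1)$ cost only $O(1)$ stages.

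After that, $r\le b/2$ and we are in the first regime. Each stage now drops $r$ by $\Omega(\log(b+1))$, so at most $O(1+r/\log(b+1))\le O(1+h/\log(b+1))$ stages remain until $2^r-1\le b$. For small $b$, where $L\le2$, we use $h\le b\le 3$ and $O(1+h)=O(1)$ stages, which absorbs the constants. Adding the two regimes gives the claimed bound.
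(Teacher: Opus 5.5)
Your proof is correct, but it handles the early phase differently from the paper. The paper works with the gap $a=b-r$ and proves once that $k_b(r)\ge c\min\{\log(a+1),\log(b+1)\}$. It then splits at $a\ge\sqrt b$ rather than at $a\ge b/2$.

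\begin{itemize}
\item Before the threshold, the paper uses only the fact that every cut removes at least one level. This costs at most $\lceil\sqrt b\rceil$ stages, which is already $O(b/\log b)=O(h/\log(b+1))$ when $h\le b<2h$. When $b\ge2h$, the initial gap already exceeds $\sqrt b$, so this phase does not occur.
\item After the threshold, $\log(a+1)\ge\frac12\log b$, so every cut removes $\Omega(\log(b+1))$ levels.
\end{itemize}

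You place the threshold at $r\le b/2$, where the crude count gives $h-b/2$ stages, which is too many. You therefore need the sharper slack recursion $g\mapsto g+\lfloor\log(g+1)\rfloor$ and the dyadic sum $\sum_j O(2^j/j)=O(b/\log b)$. That argument is valid, and it shows that even reaching slack $b/2$ fits the target bound. The $\sqrt b$ threshold, however, makes the finer bookkeeping unnecessary, so the paper's version is shorter and needs no summation.

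Two small points need tidying.
\begin{itemize}
\item In your second regime, check that $k=\lfloor\log(g+1)\rfloor$ also satisfies $2^{k+1}\le b+1$ and $k<r$. Both hold: $g<b/2$ gives $2^{k+1}\le 2g+2\le b+1$, and $2^r>b+1$ then gives $k+1<r$.
\item $L\le2$ means $b\le6$, not $b\le3$. This is harmless, since the stage count in that case is still $O(1)$.
\end{itemize}
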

\begin{proof}
Put $a=b-r\ge1$. This gap increases as the remaining height decreases. The two cut conditions in \cref{eq:feasible-cut} become
\[
2^k-k\le a,\qquad 2^{k+1}-1\le b.
\]
Choose $k\ge1$ maximally with $2^k\le\min\{a+1,(b+1)/2\}$. Since the subtree does not fit in $b$, we have $k<r$, and both cut conditions hold. Hence, for an absolute constant $c>0$,
\[
 k_b(r)\ge c\min\{\log(a+1),\log(b+1)\}.
\]
Each cut reduces $r$, and thus increases $a$, by at least $k_b(r)$.

Once $a\ge\sqrt b$, every nonterminal cut removes $\Omega(\log(b+1))$ levels. The remaining height is at most $h$, so there are $O(h/\log(b+1))$ such cuts. If $b\ge2h$, the initial gap already exceeds $\sqrt b$. Otherwise $h\le b<2h$. Since every cut removes at least one level, at most $\lceil\sqrt b\rceil=O(h/\log(b+1))$ cuts are needed to reach this threshold.
\end{proof}

\begin{theorem}[Work analysis of $\Clean$]
\label{thm:seq-upper}
For $n=2^h-1$ and $h+1\le s\le n$, \cref{alg:seq-clean} uses at most $s$ pebbles and work $W=O(nh/\log(s+1))$.
\end{theorem}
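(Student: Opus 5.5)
We prove the two claims separately: the space bound by induction over the recursion of \cref{alg:seq-clean}, and the work bound by charging each block's cost to the leaves it covers.

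\textbf{Space.} We induct on the recursion and claim that a call $\Clean(T,b,\rho)$ never holds more than $b$ pebbles inside $T$. By the choice of $m$ and \cref{lem:contract}, the $\Contract$ step peaks at $S(r,m)\le b$. After $\Contract$, at most the root of $T$ remains pebbled, and only when $\rho=1$. The recursive calls then run one at a time with budget $b-\rho$, each on a boundary subtree that starts pebble-free. So the total never exceeds $b$.

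Two feasibility checks remain. First, $m=2$ must always be admissible. Since $S(r,2)=\max\{3,r+1\}\le b$ whenever $b\ge r+1$ and $r\ge2$, this holds. Second, children inherit a valid budget. A boundary subtree has height at most $r-1$, and $b-\rho\ge r\ge (r-1)+1$, so the child's requirement is met. The top call has $b=s\ge h+1$, which gives the bound of $s$ pebbles.

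\textbf{Work.} By \cref{eq:block-work}, a block on a height-$r$ subtree costs $K(r,m)\le 3\cdot2^{r-1}+2^{r-1}=O(2^{r})$. This is $O(1)$ per leaf of that subtree. Charge each block to the leaves it covers. Then the total work is $O(2^{h-1})$ times the maximum number of blocks containing a fixed leaf, so it suffices to show that every root-to-leaf branch meets $O(1+h/\log(s+1))$ blocks.

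Consider the sequence of subtree heights along one branch. The top call uses budget $s$, and every later call uses $b=s-1\ge h$, with $r_0\le h-1$ after the first contraction. These are exactly the hypotheses of \cref{lem:slack}.

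\textbf{Main obstacle.} \cref{lem:slack} assumes that each stage reduces $r$ by at least $k_b(r)$, the largest feasible equal-depth cut. The chosen $m$ is the largest balanced boundary with $S(r,m)\le b$. Let $k=k_b(r)$. Then $m'=2^{k}$ satisfies $m'+r-k\le b$ and $2m'-1\le b$, and both terms of $S(r,2^k)=\max\{2^{k+1}-1,\,r+2^k-k\}$ are at most $b$. So $m\ge 2^{k}$. A balanced boundary with $m\ge 2^k$ has every vertex at depth at least $\lfloor\log m\rfloor\ge k$, so every child subtree has height at most $r-k$. The decrease condition therefore holds along every branch.

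\textbf{Termination and total.} Once $2^r-1\le b$, the choice $m=2^{r-1}$ is feasible: $S=\max\{2^r-1,\,r+2^{r-1}-(r-1)\}\le b$ when $2^r-1\le b$. So the next block cuts down to leaves, and leaf calls cost $O(1)$ each, which is $O(n)$ in total. Combining, each leaf lies in $O(1+h/\log s)$ blocks, giving $W=O(n(1+h/\log(s+1)))=O(nh/\log(s+1))$, since $\log(s+1)\le h+1$.
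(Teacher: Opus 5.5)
Your proposal is correct and follows the paper's proof. Like the paper, you charge each $\Contract$ call $O(1)$ per leaf of its subtree, treat the first block separately, and bound the number of blocks containing a leaf by \cref{lem:slack} once you know each stage removes at least $k_b(r)$ levels. Your explicit space induction and your check that the greedy boundary satisfies $m\ge 2^{k_b(r)}$, and so lies at depth at least $k_b(r)$, fill in steps that the paper asserts in a single line.
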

\begin{proof}

After the first call to $\Contract$, which costs $O(n)$ work, the remaining task is to clean boundary subtrees of height at most $h-1$ using budget $b=s-1$. For any subsequent height-$r$ subtree with $2^r-1>b$, the largest feasible balanced boundary contains at least $2^{k_b(r)}$ vertices, so every recursive branch loses at least $k_b(r)$ levels.

We count the remaining work by distributing each $\Contract$ call's cost equally among the leaves of its subtree:
\begin{itemize}
    \item By \cref{eq:block-work}, each leaf is charged $O(1)$ work per call.
    \item By \cref{lem:slack}, each leaf participates in $O(1+h/\log(b+1))$ calls, including at most one final call once its remaining subtree fits within $b$.
    \item Thus each leaf is charged $O(1+h/\log(b+1))$ work in total.
\end{itemize}
There are $(n+1)/2$ leaves. Adding the first call's work gives
\[
W=O(n)+O\!\left(n\left(1+\frac{h}{\log(b+1)}\right)\right)
 =O\!\left(\frac{nh}{\log(s+1)}\right),
\]
where $b+1=s$, $\log s=\Theta(\log(s+1))$, and $\log(s+1)\le h$ because $s\le n=2^h-1$.

The space bound follows from \cref{eq:boundary-space}.

\end{proof}

\subsection{Parallel depth upper bound}
\label{sec:parallel-upper}

In this section, we construct and analyze a parallel spooky pebbling algorithm for complete binary trees at every feasible space budget. We parallelize boundary preparation and the cleanup of independent subtrees. Equal-depth boundaries give the resulting subtrees a common height and schedule. Once block cleanup has made them small enough, we divide the space among jobs and process them in parallel batches.
At the minimum budget $s=h+1$, our schedule achieves depth $O(n/\log\log n)$. The matching lower bound in \cref{eq:universal-depth} establishes asymptotic optimality.

For the initial root computation and boundary preparation, we use $\Sweep$, a parallel counterpart of $\Fast$.

\begin{lemma}[Depth analysis of $\Sweep$]
\label{lem:sweep}
For a complete subtree $T$ of height $q$ and space budget $b\ge q+1$, a procedure $\Sweep(q,b)$ computes and retains its root in depth
\begin{equation}
O\!\left(q+\frac{2^q}{b}\right).
\label{eq:sweep}
\end{equation}

Starting from $P\cap V(T)=\varnothing$ and arbitrary $G$, each vertex is pebbled once. Every non-root vertex is ghosted after use.

\end{lemma}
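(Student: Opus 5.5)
We construct $\Sweep(q,b)$ in two layers. If $2^q\le b$, the whole subtree fits in memory, and a level-by-level schedule suffices. In round $j=1,\dots,q$ we pebble every vertex at height $j$ simultaneously. These active sets are disjoint, since each vertex's active set is the vertex and its two children, and distinct parents have distinct children. After each round we release the children just used: leaves are unpebbled in the next round's first phase, and internal vertices are ghosted in the second phase of the same round. This takes $O(q)$ rounds, and each vertex is pebbled exactly once. For space, $|U_t|$ is at most two consecutive levels, and in the worst case the whole tree, which has $2^q-1\le b$ vertices. Ghosting is allowed on vertices that already carry a ghost from an arbitrary $G$ only after they are pebbled, so the arbitrary initial $G$ causes no problem: $\Peb$ clears any ghost.

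For general $b$, choose the largest $p$ with $2^{p+1}\le b/2$; when $b\ge q+1$ is small, take $p$ equal to some constant, possibly $p=0$. Cut $T$ at height $p$, so that there are $2^{q-p}$ disjoint height-$p$ subtrees. Process them in the order of $\Fast$ applied to the contracted tree of height $q-p$ whose leaves are these subtrees. The difference from $\Fast$ is that we run $\lfloor (b-(q-p)-1)/2^{p+1}\rfloor$ of the ready height-$p$ subtrees concurrently, each with the level-by-level schedule above.

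Concretely, I would process the height-$p$ subtrees in left-to-right batches of size $g=2^{j}$, where $g\cdot 2^{p}\approx b/4$. Each batch is aligned to an ancestor $w$ at height $p+j$, and its $g$ subtrees are exactly those below $w$. Running them in parallel takes $O(p)$ rounds. The tree of height $j$ above them is then pebbled level by level in $O(j)$ rounds, leaving only $w$ pebbled. The vertices above the $w$'s form a tree of height $q-p-j$, and we handle it by the sequential $\Fast$ order on blocks. At most $q-p-j$ retained block roots exist at any time, so the space is $b/4+2b/4+(q-p-j)+O(1)\le b$ with a suitable constant split. The number of batches is $2^{q-p-j}=O(2^q/b)$, and each costs $O(p+j)$ rounds, which gives $O((2^q/b)\log b)$. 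That is too much by a $\log b$ factor.

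The main obstacle is removing this $\log b$. I would pipeline: rather than finishing a batch before starting the next, start a new height-$(p+j)$ block at every round. A level-by-level sweep of block $i$ occupies level $\ell$ at round $i+\ell$, and each block uses at most $2\cdot 2^{p+j-\ell}$ pebbles at its active level. Summing over in-flight blocks at different stages gives a geometric series of total $O(2^{p+j})=O(b)$ per round. The depth therefore becomes the number of blocks plus the pipeline latency, $O(2^q/b+(p+j))$, together with $O(q)$ for the top part executed as the blocks' roots arrive. Because the top tree consumes one root per block, its $\Fast$-ordered merges can also be overlapped with the pipeline, adding $O(q)$ total. The result is $O(q+2^q/b)$.

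Each vertex is pebbled once, since blocks are disjoint and the top tree is pebbled once. Every non-root vertex is released by ghosting, or by unpebbling for leaves, right after its parent is computed. This matches the stated invariants. The space check for the pipelined schedule, namely the geometric sum plus at most $q$ retained roots of the top tree, is where the constants must be chosen carefully. For small $b=\Theta(q)$, I would take blocks of constant size, so that $2^q/b$ blocks each take $O(1)$ amortized rounds.
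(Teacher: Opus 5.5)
Your construction works whenever $b-q=\Omega(b)$, after adjusting the constant split. Pipelining $\Theta(b)$-size bottom blocks, one started per round, and combining their roots by binary-counter carries is essentially the paper's argument for $b\ge2q$. The gap is the regime $b=q+o(q)$, and especially $b=q+1$. The paper needs this case: \cref{cor:parallel-contraction} gives each batched $\Sweep$ job exactly $q+1$ registers, and \cref{alg:parallel-tree} calls $\Sweep(h,h+1)$ at the minimum budget.

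Your space accounting always reserves room for the worst-case stack of roughly $q-p-j$ retained top roots. Only about $b-q+p+j$ registers then remain for the pipeline, which itself needs $\Theta(2^{p+j})$. When $b-q=O(1)$ this forces $p+j=O(1)$ and $O(1)$ work per round, so the depth is $\Theta(2^q)$ instead of $O(2^q/q)$. Indeed, your concurrency count $\lfloor (b-(q-p)-1)/2^{p+1}\rfloor$ equals $\lfloor p/2^{p+1}\rfloor=0$ at $b=q+1$. Your closing sentence does not repair this. Constant-size blocks number $\Theta(2^q)$, not $2^q/b$, and running $\Theta(b)$ of them at once needs $\Theta(b)$ free registers, which you do not have while the stack is nearly full.

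The missing idea is that the stack is rarely full. Before the $j$th bottom job, the retained roots are exactly the power-of-two blocks of the completed prefix, so there are $\popcount(j)$ of them. This is one reason to perform carries in separate rounds rather than overlapping them: serial carries cost only $R-1=O(2^q/b)$ rounds and keep the stack at exactly $\popcount(j)$. The paper proceeds as follows:
\begin{itemize}
\item It takes bottom subtrees of height $\ell=\lfloor q/4\rfloor$, with $d=q-\ell$.
\item It gives job $j$ the adaptive budget $b_j=b-\popcount(j)\ge\ell+1$.
\item It uses the large-budget pipelined schedule whenever $b_j\ge2\ell$. Then $b_j=\Omega(b)$, so the job takes $O(\ell+2^\ell/b)$ rounds.
\item Otherwise it falls back to the ghosting depth-first order, which fits in $\ell+1$ registers.
\end{itemize}
A fallback job has $\popcount(j)>2d/3$. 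There are at most $2^{-2d/3}3^d=2^{\eta d}$ such $j$, where $\eta=\log3-2/3<1$. Their $O(2^{\eta d+\ell})$ rounds are $O(2^q/b)$, because $b\,2^{-(1-\eta)d}$ is bounded when $b,d=\Theta(q)$. Without this averaging over counter states, no worst-case reservation can extract parallelism $\Theta(q)$ from $q+1$ registers.
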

\begin{proof}
A leaf needs one round and one register. If $2^q-1\le b$, compute
levelwise and ghost each level after use. If $b<16$, follow the
depth-first order of $\Fast$, replacing removals by ghosting.
This uses at most $q+1$ registers and $O(2^q)=O(2^q/b)$ rounds.
Hence assume $16\le b<2^q-1$.

For $b\ge2q$, choose the largest power of two $p\le b/8$ and put
$p=2^{\ell-1}$, $d=q-\ell$, and $R=2^d$.
Compute the $R$ bottom height-$\ell$ subtrees levelwise, starting
one every second advancement round and ghosting its previous level.
After $a\ge1$ advances, a job's old and new levels use $3p/2^a$
registers during the round. The unfinished jobs therefore occupy at most
\[
p+\frac{3p}{4}+\frac{3p}{16}+\cdots\le2p, \qquad \frac{3p}{2}+\frac{3p}{8}+\cdots=2p.
\]
Combine adjacent equal-height roots by pebbling their parent and
ghosting both children, as in a binary counter's carries. Pause the
unfinished jobs during these rounds. The retained roots represent the
power-of-two blocks of the completed prefix. Before the last completion,
there are $\popcount(j)\le d$ such roots after $j$ completions.
Including an incoming root and temporary parent, the total space is at most
\[
2p+d+2\le b/4+q+2\le b.
\]
There are $2R+O(\ell)$ advancement rounds and $R-1$ combinations.
Since $p=\Theta(b)$, the depth is $O(q+2^q/b)$.

For $q+1\le b<2q$, put $\ell=\lfloor q/4\rfloor$,
$d=q-\ell$, and $R=2^d$. Process the bottom subtrees one at a time,
using the same binary carries. Before job $j=0,\ldots,R-1$, the retained
stack has $\popcount(j)$ roots, leaving local budget
\[
b_j=b-\popcount(j)\ge b-d\ge\ell+1.
\]
Use the preceding construction, including its base cases, when
$b_j\ge2\ell$, and the ghosting depth-first order otherwise.
The combinations fit in $d+2\le q+1$ registers. If $H$ jobs require depth-first order, then
$\popcount(j)>b-2\ell\ge2d/3$, so
\[
H\le2^{-2d/3}\sum_{i=0}^d\binom di2^i=2^{\eta d}, \qquad \eta=\log3-\frac23<1.
\]
The other jobs have $b_j=\Omega(b)$ and take $O(\ell+2^\ell/b)$
rounds each, while a depth-first job takes $O(2^\ell)$ rounds.
Summing these depths and the root combinations gives
\[
O\!\left(R\ell+\frac{2^q}{b}+H2^\ell+R\right)=O(2^q/b).
\]
Indeed, relative to $2^q/b$, the extra terms are bounded by
$b\ell/2^\ell$ and $b2^{-(1-\eta)d}$, which are uniformly bounded
because $b,\ell,d=\Theta(q)$.

Every vertex is pebbled once and every non-root value is ghosted
after use. Simultaneous actions lie in disjoint bottom subtrees,
and root combinations run separately.
\end{proof}

For parallel block cleanup, choose a cut of depth $k$, giving $m=2^k$ boundary subtrees of height $q=r-k$. We prepare it with $\Sweep$ and process the top block levelwise, since vertices on a common level have disjoint active sets. The same feasibility conditions in \cref{eq:feasible-cut} apply.

\begin{lemma}[Work and depth of block cleanup using $\Sweep$]
\label{cor:parallel-contraction}
For every cut satisfying \cref{eq:feasible-cut}, a height-$r$ subtree with no pebbles and arbitrary ghosts can be reduced to its $m$ boundary subtrees with $O(2^r)$ work, space at most $b$, and depth
\begin{equation}
O\!\left(r+\frac{2^r}{b}\right).
\label{eq:parallel-contraction}
\end{equation}
The block finishes with no pebbles, and all remaining ghosts lie in the boundary subtrees.
\end{lemma}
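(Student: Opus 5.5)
The plan is to run the three phases of $\Contract$ in parallel and bound each phase separately. The phases are boundary preparation, top-block pebbling and unpebbling, and boundary release. We fix a cut of depth $k$ satisfying \cref{eq:feasible-cut}, so $m=2^k$ boundary roots lie at a common depth and every boundary subtree has height $q=r-k$. The block above $F$ has $m-1$ vertices on $k$ levels.

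\textbf{Boundary preparation.} We prepare the boundary roots sequentially, one boundary subtree after another, each with $\Sweep$. While the $j$-th subtree is processed, $j$ roots are retained, so its local budget is $b_j=b-j$. Since $m+q\le b$, we have $b_j\ge q+1$, so \cref{lem:sweep} applies to every job. This gives depth $O(q+2^q/b_j)$ per job and one pebbling per vertex. The work is therefore $O(m2^q)=O(2^r)$, which already exceeds the top-block cost.

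The main obstacle is the depth sum $\sum_{j<m}(q+2^q/(b-j))$. Near the end, $b-j$ can be as small as about $q+1$ when $m+q=b$, so a naive sum gives roughly $2^q\log(b/(b-m))$ rather than $2^r/b$. To avoid this, I would first try to prepare the boundary in two halves. The first $m/2$ roots all run with budget at least $b-m/2\ge b/2$, because $2m-1\le b$. After those roots are ready, I would combine each adjacent pair into its parent by pebbling the parent; this is a step of the top block done early and keeps the parent pebbled. The parent takes the place of its two children, freeing $m/4$ registers. Repeating this on the remaining half is the binary-counter trick from \cref{lem:sweep}.

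If this combining interferes with the requirement to retain $F$, I would fall back to a simpler accounting. The term $\sum_j 2^q/(b-j)$ with $b-j\ge b-m\ge (b-1)/2$ still gives $O(m2^q/b)=O(2^r/b)$ whenever $m\le b/2$. The bad case $m>b/2$ forces $q\le b-m<b/2$. I would then split off the last $O(q)$ jobs and bound each of them by $O(2^q/q)$ using the low-budget case of \cref{lem:sweep}. Their total is $O(2^q)\le O(2^r/b)$ because $m=\Theta(b)$.

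\textbf{Top block and release.} The $q$ terms sum to $mq$, which we must also show is $O(r+2^r/b)$. This holds since $mq\le m2^q/2\cdot O(1)$ and $m2^q=2^r$, giving $mq=O(2^r/b\cdot b q/2^q)$. In the regime $2^q\ge bq$ this is fine. Otherwise $2^q<bq$, and then $b\ge q+m$ gives $mq\le bq$ with $2^r=m2^q$. Here I would instead run jobs concurrently, giving each about $b/m\ge q+1$ registers as long as the feasibility constraints allow. This is the delicate case analysis I expect to dominate the proof.

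Next, the top block is pebbled levelwise over $k$ rounds and unpebbled levelwise over $k$ rounds. Vertices on one level have pairwise disjoint active sets, so each level fits in a single round. The space peak is $2m-1\le b$. Finally, all of $F$ is released in one round by ghosting internal roots and unpebbling leaves. This phase adds depth $O(k)\le O(r)$ and work $O(m)$.

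\textbf{Conclusion.} Leaves are cleared, top vertices are unpebbled, and boundary roots are released, so no pebbles remain. Every ghost outside the boundary subtrees was exorcised when it was repebbled, so the only remaining ghosts lie in the boundary subtrees, as required.
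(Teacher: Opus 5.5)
Your plan has a genuine gap in the boundary-preparation phase, exactly where you defer to a ``delicate case analysis.''

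\textbf{The sequential schedule fails.} Preparing the $m$ boundary roots one after another costs depth at least $mq$, because each $\Sweep$ call needs at least $q$ rounds to reach its root. This is not $O(r+2^r/b)$ for every feasible cut. Take $m=b/4$ and $q=5$, which satisfies \cref{eq:feasible-cut}. The sequential schedule then needs more than $b$ rounds, while the target is $r+2^r/b=\log b+O(1)$.

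\textbf{The concurrent fallback does not cover this case.} Running all $m$ jobs at once with about $b/m$ registers each requires $b/m\ge q+1$, that is, $m(q+1)\le b$. The feasibility constraints do not guarantee this, and it fails in the example above. So the regime $m(q+1)>b$ with $2^q<bq$ is covered by neither of your schedules.

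\textbf{The missing idea is batching.} Reserve $m-1$ registers for completed roots and give each active job exactly $q+1$ registers. The cut conditions $2m-1\le b$ and $m+q\le b$ give $b-m+1\ge\max\{(b+1)/2,\,q+1\}$. Hence a batch can hold $a=\lfloor(b-m+1)/(q+1)\rfloor\ge b/(4(q+1))$ jobs. By \cref{lem:sweep}, each batch has depth $O(q+2^q/(q+1))$. There are $O(m(q+1)/b)$ batches, so the total depth is $O(m2^q/b)=O(2^r/b)$, using $q(q+1)=O(2^q)$. Combined with the fully concurrent case $m(q+1)\le b$, this is the paper's argument.

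\textbf{Smaller points.}
\begin{itemize}
\item The term you call the main obstacle is not one. Since $2m-1\le b$, every $b_j=b-j\ge b-m+1\ge(b+1)/2$, so $\sum_j 2^q/b_j=O(2^r/b)$ always. Splitting off the last $O(q)$ jobs is unnecessary.
\item Your binary-counter combining does interfere with retaining $F$. Once a boundary root is ghosted to free its register, its parent cannot be unpebbled later without recomputing that boundary subtree. You were right to drop it.
\item Your top-block pebbling and unpebbling, the boundary release, and the final ghost accounting match the paper.
\end{itemize}
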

\begin{proof}
First prepare the $m$ boundary roots with $\Sweep$.
If $m(q+1)\le b$, give each job $\lfloor b/m\rfloor$ registers
and run them concurrently. By \cref{lem:sweep}, this takes
$O(q+m2^q/b)$ rounds.

Otherwise reserve $m-1$ registers for completed roots and give each
active job $q+1$ registers. The cut conditions imply
$b-m+1\ge\max\{(b+1)/2,q+1\}$, so a batch can contain
\[
a=\left\lfloor\frac{b-m+1}{q+1}\right\rfloor\ge\frac{b}{4(q+1)}
\]
jobs. Each batch takes $O(q+2^q/(q+1))$ rounds by \cref{lem:sweep}
and uses at most $m-1+a(q+1)\le b$ registers. The $O(m(q+1)/b)$
batches therefore take $O(m2^q/b)$ rounds, since $q(q+1)=O(2^q)$.
Both preparations use $O(m2^q)$ work within budget $b$, retaining all
boundary roots.

The $k$ levels above the boundary take $k$ rounds to pebble and
$k$ rounds to unpebble, using $2(m-1)$ work and at most $2m-1\le b$
registers. Release the boundary in one further round, adding at most
$m$ work. Substituting $q+k=r$ and $m2^q=2^r$ gives the bounds.
\end{proof}

To process smaller subtrees together, give each height-$r$ job enough space for one block to take $O(r)$ rounds. By \cref{eq:parallel-contraction}, a local budget of order $2^r/r$ suffices. We use
\begin{equation}
p(r)=\min\left\{2^r-1,\left\lceil\frac{16\cdot2^r}{r}\right\rceil\right\}.
\label{eq:resolver-space}
\end{equation}
The routine $\Resolve$ uses this nondecreasing budget $p(r)=\Theta(2^r/r)$ to process equal-height jobs in batches, repeating on the smaller subtrees left by each stage.

\begin{lemma}[Costs of parallel forest cleanup]
\label{lem:resolver}
If $b\ge p(r)$, a procedure $\Resolve(A,r,b)$ clears $A$ disjoint height-$r$ subtrees containing ghosts but no pebbles, with space at most $b$, work $O(A2^r\lambda(r))$, and depth
\begin{equation}
O\!\left(r+\frac{A2^r}{b}\lambda(r)\right).
\label{eq:forest-cleanup}
\end{equation}
\end{lemma}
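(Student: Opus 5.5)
The plan is to define $\Resolve(A,r,b)$ recursively by stages and analyze it by a recursion on $r$. With $b\ge p(r)$, run $c=\lfloor b/p(r)\rfloor\ge1$ jobs concurrently, each with local budget $p(r)$. If $2^r-1\le p(r)$, each job simply sweeps its whole subtree levelwise: pebble bottom-up, exorcising ghosts as it goes, then unpebble top-down. This costs $O(2^r)$ work and $O(r)$ rounds per job. Otherwise each job applies one block cleanup, as in \cref{cor:parallel-contraction}, with local budget $p(r)$ and the largest equal-depth cut $k$ satisfying \cref{eq:feasible-cut}. That step costs $O(2^r)$ work and depth $O(r+2^r/p(r))=O(r)$, and it leaves $2^k$ ghosted subtrees of height $r'=r-k$. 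Once all $A$ jobs have been contracted in $\lceil A/c\rceil$ batches, we call $\Resolve(A2^k,r',b)$ on the resulting forest. This call is legal because $p$ is nondecreasing, so $b\ge p(r)\ge p(r')$.

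Next I bound the size of the cut. With $b'=p(r)=\Theta(2^r/r)$, the conditions $2^{k+1}-1\le b'$ and $2^k+(r-k)\le b'$ both allow $2^k=\Theta(2^r/r)$ up to constants. So $k\ge r-\log r-O(1)$ and $r'\le\log r+O(1)$. Because each stage maps the height roughly from $r$ to $\log r$, the number of stages is $O(\lambda(r))$, with constant-size heights handled directly. This iterated-logarithm count is the source of the $\lambda(r)$ factor.

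Now I sum the costs. Every stage processes the same total number of leaves, $A2^{r}$, so each stage costs $O(A2^r)$ work. Over $O(\lambda(r))$ stages the work is $O(A2^r\lambda(r))$. For depth, a stage at height $r_i$ with $A_i$ jobs uses $\lceil A_i/c_i\rceil$ batches of $O(r_i)$ rounds each, giving
\begin{equation*}
O\!\left(r_i+\frac{A_i p(r_i)}{b}\,r_i\right)=O\!\left(r_i+\frac{A_i2^{r_i}}{b}\right)=O\!\left(r_i+\frac{A2^r}{b}\right),
\end{equation*}
since $A_i2^{r_i}=A2^r$. The additive terms $r_i$ shrink doubly exponentially, $r, \log r+O(1),\dots$, so they sum to $O(r)$. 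The other terms contribute $O(\lambda(r)A2^r/b)$. Together these give \cref{eq:forest-cleanup}.

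The main obstacle is rounding and the space accounting at small heights. The ceiling in $\lceil A_i/c_i\rceil$ adds one extra batch per stage, but those extra $O(r_i)$ rounds fold into the $O(r)$ term. Concurrent jobs must stay within $c\,p(r_i)\le b$. When $r_i$ is small enough that $p(r_i)=2^{r_i}-1$, the feasibility bound $p(r_i)\ge r_i+1$ must also be checked. The least routine point is verifying that \cref{eq:feasible-cut} admits $k\ge1$ with $r-k=O(\log r)$ under budget $\lceil16\cdot2^r/r\rceil$. I would check this first, taking $2^k$ as the largest power of two at most $2^r/r$ and confirming $2^k+r-k\le16\cdot2^r/r$.
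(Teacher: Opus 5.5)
Your proposal is correct and follows essentially the same route as the paper: stages that preserve $A_i2^{r_i}=A2^r$, batches of $\lfloor b/p(r_i)\rfloor$ concurrent single-block contractions, each costing $O(r_i)$ rounds and $O(2^{r_i})$ work, a direct levelwise sweep once $p(r_i)=2^{r_i}-1$ (i.e.\ $r_i\le16$), and an $O(\lambda(r))$ stage count from the height map $r\mapsto\log r+O(1)$. The only difference is that you take the largest feasible cut instead of the paper's explicit $q=\lceil\log t\rceil+2$, which is immaterial: that explicit cut is feasible and the feasible $k$ form an initial segment, so your choice gives $r'\le\lceil\log r\rceil+2\le r/2$, and the paper's estimates $f(t)\le t/2$ and $f(f(t))\le\log t$ for $t\ge256$ then supply the stage count and the $O(r)$ sum of heights that you left as routine.
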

\begin{proof}
Let $a$ and $t$ be the number and height of the subtrees at the current stage. Put $M=a2^t$, twice the number of their leaves. Initially $M=A2^r$. A block replaces one height-$t$ subtree by $m$ height-$q$ subtrees with $m2^q=2^t$, so $M$ is unchanged between stages.

For $t>16$, choose
\begin{equation}
q=\lceil\log t\rceil+2, \qquad m=2^{t-q}.
\label{eq:resolver-cut}
\end{equation}
Then $4t\le2^q<8t$, so $m\le2^t/(4t)$ and $q\le t/2\le2^t/(4t)$. Both cut conditions hold with the local budget $p(t)$. By \cref{cor:parallel-contraction}, one job takes $O(t)$ rounds and $O(2^t)$ work.

Run at most $\lfloor b/p(t)\rfloor$ disjoint jobs concurrently, keeping their total assigned budget within $b$. Monotonicity of $p$ ensures that at least one job fits at every stage. The $O(1+ap(t)/b)$ batches take altogether
\begin{equation}
O\!\left(\left(1+\frac{ap(t)}b\right)t\right)=O(t+M/b)
\label{eq:forest-phase}
\end{equation}
rounds and $O(M)$ work. Each batch leaves no pebbles, so its registers can be reused. After processing all $a$ jobs, replace $(a,t)$ by $(am,q)$. At $t\le16$, each subtree fits within $p(t)=2^t-1$: pebble it bottom-up and unpebble it top-down using the same batches. This also clears any leaf ghosts.

For $t>16$, the height update $f(t)=\lceil\log t\rceil+2$ satisfies $f(t)\le t/2$. Thus the heights over all stages sum to $O(r)$. Moreover, $f(f(t))\le\log t$ for $t\ge256$, so there are $O(\lambda(r))$ stages. Summing \cref{eq:forest-phase} gives $O(r+M\lambda(r)/b)$ rounds. Each stage costs $O(M)$ work, giving the work bound.
\end{proof}

We now combine these routines into our parallel pebbling algorithm. First, $\Sweep(h,s)$ computes and retains the root. Using the remaining budget $b=s-1$, we apply block cleanup to one subtree at a time. Once $p(r)\le b$, $\Resolve$ clears the remaining forest.

\begin{algorithm}[H]
\caption{Parallel spooky pebbling on complete binary trees}
\label{alg:parallel-tree}
\begin{algorithmic}[1]
\Require An empty complete binary tree $B_h$ of height $h\ge2$, and a total space budget $s$ satisfying $h+1\le s\le2^h-1$.
\Ensure All ghosts in $B_h$ are exorcised, leaving its root pebbled and every other vertex empty.
\State Run $\Sweep(h,s)$ and retain the root; set $b\gets s-1$
\State Start with the two child subtrees: $(A,r)\gets(2,h-1)$
\While{$p(r)>b$}
 \State Choose the largest $1\le k<r$ satisfying \cref{eq:feasible-cut}
 \State Clear that block in each subtree using \cref{cor:parallel-contraction}, one subtree at a time
 \State Replace the pending forest by its boundary subtrees: $(A,r)\gets(A2^k,r-k)$
\EndWhile
\State Run $\Resolve(A,r,b)$ on the pending forest
\end{algorithmic}
\end{algorithm}

\begin{theorem}[Work and depth of our parallel algorithm]
\label{thm:parallel-upper}
For $n=2^h-1$ and $h+1\le s\le n$, \cref{alg:parallel-tree} uses at most $s$ pebbles and achieves
\begin{align}
D&=O\!\left(h+\frac{nh}{s\log(s+1)}+\frac{n\lambda(h)}s\right), \label{eq:depth-upper}\\
W&=O\!\left(n\left(\frac h{\log(s+1)}+\lambda(h)\right)\right). \label{eq:depth-schedule-work}
\end{align}
\end{theorem}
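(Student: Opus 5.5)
The plan is to bound the three phases of \cref{alg:parallel-tree} separately and then add the bounds: the initial $\Sweep(h,s)$, the sequential-over-subtrees block phase in the \textbf{while} loop, and the final $\Resolve$.

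\textbf{Space.} This is immediate. $\Sweep(h,s)$ uses at most $s$ registers by \cref{lem:sweep}. After it, the root is held and every later step runs within $b=s-1$. Each loop iteration satisfies \cref{eq:feasible-cut} with budget $b$, so \cref{cor:parallel-contraction} applies. $\Resolve$ is called only once $p(r)\le b$, so \cref{lem:resolver} applies. For the loop to make progress, the largest $k$ must exist with $k\ge1$. I would check this as in the proof of \cref{lem:slack}: $b\ge h\ge r+1$ gives $a=b-r\ge1$, and $p(r)>b$ forces $2^r-1>b$.

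\textbf{First phase.} By \cref{lem:sweep} this costs depth $O(h+2^h/s)$ and work $O(n)$.

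\textbf{Loop.} Suppose stage $i$ has $A_i$ subtrees of height $r_i$, so $A_i2^{r_i}=2^h$ is invariant. Each subtree's block costs $O(r_i+2^{r_i}/b)$ rounds and $O(2^{r_i})$ work by \cref{cor:parallel-contraction}. The stage therefore costs depth $O(A_ir_i+n/b)$ and work $O(n)$. By \cref{lem:slack} (the loop condition $p(r)>b$ implies $2^r-1>b$) there are $O(1+h/\log(b+1))$ stages. This gives total work $O(nh/\log(s+1))$ and a depth contribution of $O(nh/(s\log(s+1)))$ from the $n/b$ terms.

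The main obstacle is the sum $\sum_i A_ir_i=\sum_i 2^{h-r_i}r_i$. While the loop runs we have $p(r_i)>b$, so $2^{r_i}/r_i\ge b/16$, which gives $2^{h-r_i}r_i\le 16\cdot2^h/b$ per stage. That bound alone only matches the $n/b$ term already paid, and hence is dominated by the stage count times $n/b$. The cleaner route is a geometric argument. Since $r\mapsto 2^{-r}r$ grows as $r$ decreases, and $r$ strictly decreases across stages, the last stage dominates: $2^{h-r}r$ at least doubles each time $r$ drops by one (for $r\ge2$). So $\sum_i A_ir_i=O(A_{\mathrm{last}}r_{\mathrm{last}})=O(n/b)$. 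The first stage, with $A=2$, contributes $O(h)$, which is absorbed either way.

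\textbf{Resolve phase.} $\Resolve(A,r,b)$ has $A2^r=2^h$ and $r\le h$. By \cref{lem:resolver} it costs depth $O(h+n\lambda(h)/s)$ and work $O(n\lambda(h))$, using the monotonicity of $\lambda$.

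\textbf{Summing.} Adding the three phases and using $b=s-1=\Theta(s)$ (valid since $s\ge3$) together with $\log b=\Theta(\log(s+1))$ yields \cref{eq:depth-upper,eq:depth-schedule-work}.
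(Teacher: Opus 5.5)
Your proof is correct and follows the paper's argument. There are three phases, and each loop stage costs $O(n)$ work and $O(A_ir_i+n/b)=O(n/b)$ depth, over $O(1+h/\log(b+1))$ stages by \cref{lem:slack}, followed by \cref{lem:resolver}.

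The per-stage bound $A_ir_i\le16\cdot2^h/b$ that you set aside is exactly what the paper uses. It already suffices, because the $A_i2^{r_i}/b$ terms cost $O(n/b)$ per stage anyway, so the geometric detour is unnecessary. Note also that $2^{h-r}r$ does not ``at least double'' when $r$ drops by one: the ratio is $2(r-1)/r$, which is at least $4/3$ only for $r\ge3$. The sum is still geometric, so the detour's conclusion stands.
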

\begin{proof}
Throughout cleanup, the number $A$ and height $r$ of the pending subtrees satisfy $A2^r=n+1$. In the first part, $p(r)>b$ implies both $2^r-1>b$ and $b<16\cdot2^r/r$. Hence the depth of one block, $O(r+2^r/b)$, is $O(2^r/b)$. Processing all $A$ blocks of that height takes
\[
A\,O(r+2^r/b)=O(n/b)
\]
rounds and $O(n)$ work. The largest feasible cut is used at each stage, so \cref{lem:slack} bounds their number by $O(h/\log(b+1))$.

At the transition, \cref{lem:resolver} adds $O(h+n\lambda(h)/b)$ rounds and $O(n\lambda(h))$ work. The initial root computation takes $O(h+n/s)$ rounds and $n$ work. Adding these costs and using $b=\Theta(s)$ proves the bounds. The space bound follows from the budgets in \cref{lem:sweep,cor:parallel-contraction,lem:resolver}.
\end{proof}

The two work terms reflect the cleanup stages: large subtrees contribute $O(nh/\log(s+1))$ work, and repeated batched cleanup contributes $O(n\lambda(h))$. Running more jobs concurrently leaves less space for each job and can require more cleanup stages. \Cref{sec:separation} shows that additional work is necessary for optimal depth at certain budgets.

At $s=\Theta(n)$, this extra work can be avoided by a direct block schedule. A feasible equal-depth boundary with $\Theta(s)$ roots leaves subtrees of constant height. We prepare the boundary with $\Sweep$, compute and uncompute the top block while retaining the output, and fully clean the remaining subtrees in parallel batches. The resulting schedule uses $O(n)$ work and has depth $O(h)$.

We have combined $\Sweep$, block cleanup, and $\Resolve$ into a complete parallel pebbling algorithm and established work and depth bounds for every feasible space budget. \Cref{sec:parallel-lower} proves that this algorithm achieves asymptotically optimal depth.

\section{Work lower bounds and optimality}
\label{sec:lower-bounds}

In this section, we establish work optimality for $B_h$ through a lower bound valid for every full binary tree, including unbalanced trees. We count the recomputation forced by the space budget in a sequential strategy and extend the result to the parallel model by serialization.


\subsection{Recomputation intervals}

In this section, we follow a vertex through the computation to define intervals for counting recomputation and storage. Its value may be computed, used, and released, then computed again when it is needed later. Every other vertex except the root must eventually end empty.

Now consider the last action on a non-root vertex, which must end empty. It cannot be a $\Ghost$ action, because clearing the resulting ghost would require pebbling the vertex again. Thus its final removal must be an $\Unp$ action. Write $\tau(v)$ for the step of this final removal in a sequential strategy, or its round in a parallel strategy. A parent must have both children pebbled when it is uncomputed, so their final uncomputations occur later. In parallel execution, overlapping active sets also prohibit these actions from sharing a round. Hence, for every non-root $v$ and its descendant $u$,

\begin{equation}
 \tau(u)-\tau(v)\ge\operatorname{dist}(v,u).
 \label{eq:final-removal}
\end{equation}

For a nonempty connected subtree $C$ of $T$, let $\operatorname{root}(C)$
denote the vertex of $C$ nearest the root $r$ of $T$.

We partition $C$ into smaller connected subtrees, called pieces.
For a piece $Q$, uncomputing $\parv(\operatorname{root}(Q))$
requires a pebble on $\operatorname{root}(Q)$. If $Q$ contains no pebbles, its values must
first be recomputed. Since at most $s$ pieces can contain
pebbles at once, we seek comparable sizes to limit the vertices
covered by occupied pieces and prevent rapid shrinkage under
recursive partitioning.

\begin{lemma}[Partitioning a connected region]
\label{lem:partition}
Let $C$ be a connected subtree of $T$ with $m$ vertices.
For any $1\le a<m$, we can partition $C$ into connected
pieces $Q$ satisfying $a/4<|Q|\le a$.
\end{lemma}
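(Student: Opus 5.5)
The plan is a greedy bottom-up cutting procedure on $C$ with a small-case split and one final merge. Since $T$ is a full binary tree, every vertex of $C$ has at most two children inside $C$; this degree bound is the only structure used.

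\emph{Small thresholds.} If $a<4$, then $a/4<1$, so every piece of size between $1$ and $a$ qualifies. Partitioning $C$ into singletons works (each has size $1\le a$), and this settles the case. From now on assume $a\ge4$.

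\emph{Greedy pass.} Process the vertices of $C$ in post-order (children before parents, with $\operatorname{root}(C)$ last). Each processed vertex $v$ keeps a \emph{residual} set $R(v)$, which is a connected set containing $v$. It is formed from $v$ together with the residual sets of those children of $v$ in $C$ whose residuals were not cut off. The rule is:
\begin{itemize}
\item if $|R(v)|>a/4$, declare $R(v)$ a piece and pass nothing upward;
\item otherwise, pass $R(v)$ up to the parent of $v$.
\end{itemize}
By induction, every residual that is passed upward has size at most $\lfloor a/4\rfloor$. Since $v$ has at most two children, every residual satisfies $|R(v)|\le1+2\lfloor a/4\rfloor\le1+a/2\le a$, using $a\ge2$. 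Hence every declared piece is connected, contains the vertex at which it was cut, and has size in $(a/4,a]$.

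\emph{The leftover, which is the main obstacle.} At $\operatorname{root}(C)$ there may remain a residual $R_0$ with $|R_0|\le a/4$ that was not declared a piece. If so, I merge it into a neighbouring piece.
\begin{enumerate}
\item Because $m>a\ge a/4$, $R_0$ cannot be all of $C$.
\item $C$ is connected, so some vertex of $R_0$ has a child in $C$ outside $R_0$. That child's residual was cut, so the child is the root of a declared piece $Q$.
\item $Q\cup R_0$ is connected, since $\operatorname{root}(Q)$ is adjacent to $R_0$.
\item Its size satisfies $|Q\cup R_0|\le(1+a/2)+a/4=1+3a/4\le a$, using $a\ge4$.
\item It remains larger than $a/4$, because it contains $Q$.
\end{enumerate}
The other pieces are unchanged, so the result is a partition of $C$ into connected pieces with $a/4<|Q|\le a$.

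The final merge is the only delicate step, and it is exactly why the pass threshold is set at $a/4$ rather than $a/2$. With that choice, a maximal piece plus the leftover still fits within $a$, and the case $a<4$ has to be handled separately as above.
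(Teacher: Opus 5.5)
Your proof is correct, but it takes a genuinely different route from the paper.

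\textbf{The paper's route.} It argues top-down. Any piece of size $k>a$ is split at a centroid $x$. Since $x$ has degree at most three, removing it leaves at most three components, each of size at most $k/2$. The largest therefore has at least $(k-1)/3\ge k/4$ vertices for $k\ge4$, and the cases $k=2,3$ are checked directly. Cutting that component's edge to $x$ gives two connected parts with sizes in $[k/4,3k/4]$. Iterating until every piece has size at most $a$, each final piece is at least a quarter of a piece that exceeded $a$. The lower bound $|Q|>a/4$ thus comes for free, and no leftover needs repair.

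\textbf{Your route.} You run a single bottom-up greedy pass with threshold $a/4$. The two-children bound caps each declared piece at $1+2\lfloor a/4\rfloor$. You then absorb the one possible undersized residual at $\operatorname{root}(C)$ into an adjacent piece. That merge is where the slack $a\ge4$ is needed, and hence why you need the separate singleton case for $a<4$.

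\textbf{What each buys.} Your version is an explicit linear-time procedure, and it gives the sharper bound $|Q|\le 1+a/2$ for all pieces but the merged one. The paper's version avoids both the case split and the post-processing merge, because its invariant that every split is $1/4$-balanced makes the lower bound immediate. Both arguments use only bounded degree.

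\textbf{One step to make explicit.} In your step 2, $R_0$ contains $\operatorname{root}(C)$ and is closed under taking parents within $C$, since each residual is passed up the chain to the root. This is why any edge from $R_0$ to $C\setminus R_0$ must lead to a child $w\notin R_0$ of some $u\in R_0$. It is also why $R(w)$ must have been declared a piece: $R(u)\subseteq R_0$ does not contain $w$.
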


\begin{proof}
We repeatedly split pieces larger than $a$, keeping at least
a quarter of the vertices in each part. For a piece with
$k\ge4$ vertices, choose a centroid $x$. Removing $x$ leaves
at most three components. Each component has at most $k/2$ vertices, and their sizes sum to $k-1$.
The largest therefore has at least $(k-1)/3\ge k/4$ vertices.
Cutting its edge to $x$ gives two connected parts of sizes between $k/4$ and $3k/4$.
For $k=2$ or $3$, any edge gives these bounds.

The pieces become smaller after each split, so eventually
all have size at most $a$. Each final piece $Q$ was formed
by splitting a piece with $k>a$ vertices.
Hence $|Q|\ge k/4>a/4$.
\end{proof}

We now use these pieces to count the recomputation or cumulative
storage required between final uncomputations. The next lemma
defines the counting intervals and shows that no contribution
is counted twice.

\begin{lemma}[Recomputation and storage]
\label{lem:charging}
Fix a successful strategy and recursively partition a connected
subtree of $T$ excluding $r$. For each piece $Q$ of a subtree $C$,
write $p=\parv(\operatorname{root}(Q))$. For pieces with $p$
strictly below $\operatorname{root}(C)$, the interval $I(C,Q)$
starts immediately after the final uncomputation of
$\operatorname{root}(C)$ and ends at the final uncomputation of $p$:
\begin{equation}
 I(C,Q)=\{\tau(\operatorname{root}(C))+1,\ldots,\tau(p)\}.
 \label{eq:charge-interval}
\end{equation}
If $Q$ contains no pebbles after some step or complete round
$t$ with $\tau(\operatorname{root}(C))\le t<\tau(p)$, then
at least $|Q|$ placements occur in $Q$ during $I(C,Q)$.
Otherwise, in a parallel execution, $Q$ contributes at least
$|I(C,Q)|$ to $\Area$ during this interval.
The sets $Q\times I(C,Q)$ are pairwise disjoint across
the recursive partition.
\end{lemma}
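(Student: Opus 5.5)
The proof has three parts: a placement claim, an area claim, and disjointness. All three rest on \cref{eq:final-removal} and the requirement that uncomputing a parent needs its children pebbled.

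\emph{Placement claim.} Fix $C$, a piece $Q$ with $p=\parv(\operatorname{root}(Q))$ strictly below $\operatorname{root}(C)$, and a time $t$ in $[\tau(\operatorname{root}(C)),\tau(p))$ at which $Q$ holds no pebbles. The final uncomputation of $p$ at $\tau(p)$ requires $\operatorname{root}(Q)$ to be pebbled. I will show that every vertex $u\in Q$ is placed at least once in $(t,\tau(p)]$.

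\begin{itemize}
\item For $u=\operatorname{root}(Q)$ this is immediate.
\item Otherwise, induct downward. Suppose $u$ is placed at some time in $(t,\tau(p)]$, with the last such placement at $t_u$. If $u$ is internal and its child $w$ lies in $Q$, then $w$ must be pebbled at $t_u$. Since $Q$ was empty at $t$, $w$ was placed in $(t,t_u)$.
\end{itemize}

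This also covers ghosted vertices, because clearing a ghost is itself a placement. Since $t\ge\tau(\operatorname{root}(C))$, these placements lie in $I(C,Q)$. This yields $|Q|$ placements, one per vertex.

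\emph{Area claim.} In the other case, $Q$ contains a pebble after every step or round $t\in[\tau(\operatorname{root}(C)),\tau(p))$. Consider each round $t'\in I(C,Q)$. By definition of $U_{t'}$, the set $P_{t'-1}\cap Q$ is counted in round $t'$, and it is nonempty because $t'-1$ lies in the range. So each round of the interval contributes at least one register in $Q$ to $\Area$, giving at least $|I(C,Q)|$.

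\emph{Disjointness.} This is the main obstacle. Two pieces at the same level of the partition of $C$ are disjoint as vertex sets, so their products are disjoint. The harder case is nested levels: $Q'$ is a piece of a subtree $C'$, and $C'$ is itself a piece of $C$ containing $Q'$ (or a deeper descendant in the recursion). I will show the time intervals are disjoint in that case. The interval $I(C',Q')$ starts after $\tau(\operatorname{root}(C'))$. The interval $I(C,C')$ ends at $\tau(\parv(\operatorname{root}(C')))$, which is strictly smaller than $\tau(\operatorname{root}(C'))$ by \cref{eq:final-removal}. For a general chain $C\supset C_1\supset\cdots\supset C_j\ni Q'$, every deeper interval begins after the final uncomputation of a root lying at or below the endpoint vertex of every shallower interval. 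Monotonicity of $\tau$ along root paths then separates the intervals in time. Pieces whose vertex sets do not meet give disjoint products trivially.

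The step needing the most care is making this nesting claim precise. In particular, I will check that the condition "$p$ strictly below $\operatorname{root}(C)$" keeps every $p$ inside $C$, so that $\tau(p)$ is defined and is at most $\tau$ of every vertex below $p$.
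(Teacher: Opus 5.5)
Your proposal is correct and follows essentially the same route as the paper. It propagates forced placements downward from $\operatorname{root}(Q)$, uses $P_{t'-1}\subseteq U_{t'}$ for the storage case, and separates nested pairs in time via \cref{eq:final-removal}; your chain argument collapses to the paper's one-line comparison once laminarity of the recursive partition gives $Q'\subseteq C'\subseteq Q$, so that $\max I(C,Q)=\tau(\parv(\operatorname{root}(Q)))<\tau(\operatorname{root}(Q))\le\tau(\operatorname{root}(C'))<\min I(C',Q')$.
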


\begin{proof}
Uncomputing $p$ requires a pebble on $\operatorname{root}(Q)$.
Suppose $Q$ contains no pebbles after time $t$ with
$\tau(\operatorname{root}(C))\le t<\tau(p)$.
Following the child dependencies within $Q$ shows that every
vertex of $Q$ must be pebbled after $t$ and before $\tau(p)$
to supply its root. Ghosts cannot supply these inputs.
Thus at least $|Q|$ placements occur in $Q\times I(C,Q)$.

Otherwise, in parallel execution,
$P_t\cap Q\ne\varnothing$ for every
$\tau(\operatorname{root}(C))\le t<\tau(p)$.
Since $P_t\subseteq U_{t+1}$,
\[
\sum_{j\in I(C,Q)}|U_j\cap Q|\ge |I(C,Q)|.
\]

Finally, consider two distinct pairs $(C,Q)$ and $(C',Q')$.
If $Q\cap Q'=\varnothing$, their vertex--time sets are disjoint.
Otherwise, exchange the pairs if necessary so that
$Q'\subseteq C'\subseteq Q$. By \cref{eq:final-removal},
\[
\max I(C,Q)
=\tau(\parv(\operatorname{root}(Q)))
<\tau(\operatorname{root}(Q))
\le\tau(\operatorname{root}(C'))
<\min I(C',Q').
\]
Their intervals are therefore disjoint. In either case,
\[
\bigl(Q\times I(C,Q)\bigr)
\cap
\bigl(Q'\times I(C',Q')\bigr)=\varnothing.
\]
\end{proof}

\subsection{Work lower bound}

In this section, we prove a lower bound on work by counting the recomputation forced by the space budget. We choose piece sizes so that a constant fraction of each subtree lies in pieces containing no pebbles when its root is finally uncomputed. The lower bound on piece size lets us repeat this argument across many partition levels.

\begin{theorem}[Work lower bound]
\label{thm:universal-work}
Every sequential strategy on a full binary tree with $n$ vertices and space at most $s$ has
\begin{equation}
\Pi=\Omega\!\left(\frac{n\log n}{\log(s+1)}\right).
\label{eq:universal-work}
\end{equation}
\end{theorem}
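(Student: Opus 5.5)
We will use the recursive partition of \cref{lem:partition} together with the disjoint charging of \cref{lem:charging}. Start with $C_0=T\setminus\{r\}$; this has $n-1$ vertices and splits into at most two connected subtrees, the child subtrees of $r$. Treat each separately.

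Fix a ratio $\beta=(s+1)^{c}$ for a suitable constant $c\ge 2$; a larger constant factor such as $\beta=16(s+1)^2$ is also acceptable. Partition every current region $C$ into pieces $Q$ with $a/4<|Q|\le a$, where $a=|C|/\beta$, and recurse on each piece. The sizes then shrink by a factor between $\beta/4$ and $\beta$ per level. Hence we get $L=\Omega(\log n/\log\beta)=\Omega(\log n/\log(s+1))$ levels before pieces become constant size, and the recursion stops once $|C|\le\beta$.

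The key step is to show that at each level, a constant fraction of the vertices of each region $C$ lies in pieces that are empty at the relevant time. Take $t=\tau(\operatorname{root}(C))$, the step just before the final uncomputation of the root of $C$ finishes. At that time at most $s$ pebbles exist, so at most $s$ pieces of $C$ contain a pebble. These occupied pieces cover at most $sa=s|C|/\beta\le|C|/2$ vertices once $\beta\ge 2s$. The pieces $Q$ whose $p=\parv(\operatorname{root}(Q))$ equals $\operatorname{root}(C)$ are not charged. The piece containing $\operatorname{root}(C)$ itself is also excluded. These exceptions cover at most $3a$ vertices, which is negligible.

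So the empty, non-excluded pieces cover at least $|C|/4$ vertices. Each such piece has $\tau(\operatorname{root}(C))\le t<\tau(p)$ by \cref{eq:final-removal}, so \cref{lem:charging} gives $|Q|$ placements inside $Q\times I(C,Q)$. Summing over the regions of one level, which partition the vertices remaining from the previous level, gives $\Omega(n)$ placements per level. Since all the sets $Q\times I(C,Q)$ are pairwise disjoint across the whole recursion, the placements add up, giving $\Pi=\Omega(nL)=\Omega(n\log n/\log(s+1))$.

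The main obstacle is the bookkeeping of the excluded pieces and of small regions. The argument must make sure that regions at a level still cover $\Omega(n)$ vertices, even though regions of size at most $\beta$ stop recursing. It must also handle unbalanced trees, where pieces may be paths. I would handle both by making the level count uniform: every region at level $i$ has size between $n/\beta^i$ and $n(4/\beta)^i$ up to constants. All regions therefore stop at the same level up to a constant, and we truncate at $L=\lfloor\tfrac12\log n/\log\beta\rfloor$ so that every region still has at least $\sqrt n\gg\beta$ vertices. Small $n$, or $s$ comparable to $n$, gives only the trivial bound $\Pi\ge n$, which already matches in that regime.
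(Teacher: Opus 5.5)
Your proposal is correct and is essentially the paper's proof: you recursively partition with \cref{lem:partition} into pieces of size about $|C|/\mathrm{poly}(s+1)$ (the paper uses $a=\lfloor m/(4(s+1))\rfloor$), discard the at most $s$ occupied pieces and the few pieces at or adjacent to $\operatorname{root}(C)$ at its final uncomputation, and charge $|Q|$ placements to each remaining piece via \cref{lem:charging}. The differences are only bookkeeping: the paper avoids your level-uniformity and truncation step by working in the larger child subtree and counting, for each vertex, the $\Omega(\log n/\log(s+1))$ partitioned regions containing it; also, since $|C|/(4\beta)<|Q|\le|C|/\beta$, sizes shrink by a factor between $\beta$ and $4\beta$ (not $\beta/4$ and $\beta$), which lowers your $\sqrt n$ to roughly $n^{1/3}$ but leaves the argument intact.
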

\begin{proof}
If $\log n/\log(s+1)$ is bounded, $\Pi\ge n$ suffices.
Otherwise, let $C_0$ be the larger child subtree of $r$.
Then $|C_0|\ge(n-1)/2$, and every vertex of $C_0$ has a final
uncomputation.

Partition each subtree $C$ of size $m>8(s+1)$ using
\cref{lem:partition} with $a=\lfloor m/(4(s+1))\rfloor$.
Each piece satisfies
\begin{equation}
 \frac{m}{32(s+1)}<|Q|\le a.
 \label{eq:work-partition}
\end{equation}
Immediately after $\operatorname{root}(C)$ is finally uncomputed, exclude
its own piece and all pieces containing pebbles.
Let $\mathcal E(C)$ contain the remaining pieces. Since at most
$s+1$ pieces are excluded,
\[
\sum_{Q\in\mathcal E(C)}|Q|
\ge m-(s+1)a\ge\frac{3m}{4}.
\]
The children of $\operatorname{root}(C)$ remain pebbled after
its uncomputation. Thus each $Q\in\mathcal E(C)$ has
$\parv(\operatorname{root}(Q))$ strictly below
$\operatorname{root}(C)$. By \cref{lem:charging}, it requires
at least $|Q|$ placements during $I(C,Q)$.

Recurse inside every piece until its size is at most $8(s+1)$.
Let $\mathcal C$ be the family of subtrees that are partitioned.
By \cref{eq:work-partition}, each partition reduces size by
a factor less than $32(s+1)$. Hence every $v\in C_0$ satisfies
\[
\bigl|\{C\in\mathcal C:v\in C\}\bigr|
=\Omega\!\left(\frac{\log n}{\log(s+1)}\right).
\]
The placements counted by \cref{lem:charging} are disjoint, so
\begin{align*}
\Pi
&\ge \sum_{C\in\mathcal C}\sum_{Q\in\mathcal E(C)}|Q|
\ge \frac34\sum_{C\in\mathcal C}|C|
 = \frac34\sum_{v\in C_0}
   \bigl|\{C\in\mathcal C:v\in C\}\bigr|
=\Omega\!\left(\frac{n\log n}{\log(s+1)}\right).
\end{align*}
\end{proof}

The serialization in \cref{sec:work-counts} preserves placements
and space. Therefore the same placement lower bound holds for
parallel strategies. Since $W\ge\Pi$, it is also a work lower
bound in either execution model.

For $B_h$, $\log n=\Theta(h)$, so this matches
\cref{thm:seq-upper} and establishes optimal work in both models.
We next use the same interval argument to bound cumulative
storage and parallel depth.

\section{Parallel depth and its work cost}
\label{sec:parallel-lower}

In this section, we prove a parallel depth lower bound for every full binary tree and show that our algorithm for $B_h$ matches it at each feasible space budget. Together with the work lower bound, this establishes optimality over tree shapes. We then show that attaining optimal depth can require asymptotically more work than minimizing work alone at the same space budget.

\subsection{Cumulative storage and depth lower bounds}
\label{sec:area}

In this section, we use the interval argument of \cref{lem:charging} to bound cumulative storage and obtain a parallel depth lower bound.

Computing the root requires at least $H(T)$ rounds.
Each placement occupies a register during its round, and at most
$s$ registers are occupied in each round. Thus
\begin{equation}
\Pi\le\Area\le sD.
\label{eq:capacity}
\end{equation}

Together with \cref{thm:universal-work}, this gives
$D=\Omega(n\log n/(s\log(s+1)))$.
To obtain a stronger bound, we should count how long pebbles remain
in memory.

\begin{theorem}[Storage and recomputation]
\label{thm:weighted}
There is a constant $c>0$ such that every parallel strategy on a full binary tree with $n$ vertices satisfies
\begin{equation}
\Pi+\frac{\Area}{L}\ge cnL, \qquad L=\lambda(n)=1+\log^*n.
\label{eq:weighted}
\end{equation}
\end{theorem}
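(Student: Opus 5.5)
We prove \cref{eq:weighted} with the recursive partition and interval charging of \cref{lem:charging}, as in \cref{thm:universal-work}. The difference is that piece sizes now shrink so quickly that there are about $L$ partition levels. At each level, every piece either pays its size in placements or pays its interval length in storage. As before, let $C_0$ be the larger child subtree of $r$, so $|C_0|\ge(n-1)/2$. If $n$ is below an absolute constant, $\Pi\ge n$ already suffices.

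\emph{Partition scheme.} Define a tower-like sequence of target sizes $m_0=|C_0|$ and $m_{i+1}\approx\log m_i$, or a constant multiple of it. The sequence reaches a constant after $\Theta(\log^*n)=\Theta(L)$ levels. At level $i$, partition each current subtree $C$ of size about $m_i$ into pieces of size in $(a_i/4,a_i]$ with $a_i=\Theta(m_{i+1})$, using \cref{lem:partition}. Each $C$ thus has about $m_i/m_{i+1}$ pieces, and we recurse inside each piece.

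\emph{Charging one subtree.} Fix a subtree $C$ and a piece $Q$ whose parent $p=\parv(\operatorname{root}(Q))$ lies strictly below $\operatorname{root}(C)$. By \cref{lem:charging}, either $Q$ receives at least $|Q|$ placements during $I(C,Q)$, or $Q$ is pebbled throughout and contributes at least $|I(C,Q)|\ge\operatorname{dist}(\operatorname{root}(C),p)$ to $\Area$; the inequality on the interval length is \cref{eq:final-removal}.

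The key counting fact is that only a few pieces can lie close to $\operatorname{root}(C)$. In a binary tree, fewer than $2^{j+1}$ vertices lie within distance $j$ of $\operatorname{root}(C)$. So at most $2^{j+1}$ pieces have their parent within distance $j$. Choose $j=\log(\epsilon m_i/a_i)$ with a small constant $\epsilon$. Then all but an $O(\epsilon)$ fraction of the vertices of $C$ lie in pieces whose interval length is at least $\log(m_i/a_i)-O(1)$.

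Each such piece contributes at least $\min\{|Q|,\ \Area_Q/\Theta(\log(m_i/m_{i+1}))\}$. With $m_{i+1}=\Theta(\log m_i)$, we have $\log(m_i/m_{i+1})=\Theta(m_{i+1})=\Theta(|Q|)$. Hence the storage branch also yields $\Omega(|Q|)$, provided storage is weighted by $1/|Q|$. To get the uniform weight $1/L$ in \cref{eq:weighted}, I would instead pick level sizes with $\log(m_i/m_{i+1})\ge|Q|/L$. Even without that adjustment, since $|Q|\le L$ fails only at the top levels, the weighting can be balanced level by level.

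Summing, level $i$ charges $\Omega(|C|)$ to the quantity $\Pi+\Area/L$ for each subtree $C$ at that level. By the disjointness in \cref{lem:charging}, these charges are disjoint across levels. Each vertex of $C_0$ lies in $\Omega(L)$ partitioned subtrees, so the total is $\Omega(nL)$.

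\emph{Main obstacle.} The delicate step is tuning the piece sizes so that two conditions hold simultaneously.
\begin{itemize}
\item Every level must use the single normalization $\Area/L$, which requires the interval lengths to be $\Omega(|Q|/L)$.
\item The number of levels must remain $\Theta(\log^*n)$.
\end{itemize}
The most natural choice is $a_{i+1}=\Theta(\log a_i)$, which makes the interval length match the piece size exactly. I would try that first, treating the last $O(1)$ levels, where sizes are constant, trivially. I would also check the borderline case where $Q$ is pebbled only during part of $I(C,Q)$. That case is covered by the dichotomy in \cref{lem:charging}, but the charge to placements must be taken over the suffix of the interval after the last time $Q$ is empty.
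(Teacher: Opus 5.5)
\textbf{Gap: the piece sizes.} Your framework is the right one: the larger child subtree $C_0$, recursive partition by \cref{lem:partition}, discarding the few pieces near $\operatorname{root}(C)$, and the dichotomy of \cref{lem:charging}. But the piece sizes you commit to do not give \cref{eq:weighted}.

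With $|Q|=\Theta(\log m_i)$, the interval $I(C,Q)$ has length only $\Theta(\log m_i)=\Theta(|Q|)$. A piece that stays pebbled therefore contributes $\Theta(|Q|)$ to $\Area$, and only $\Theta(|Q|/L)$ to $\Pi+\Area/L$. Each level then charges $\Omega(|C|/L)$, not $\Omega(|C|)$, and the $\Theta(L)$ levels sum to $\Omega(n)$, which is trivial.

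What your scheme actually proves is the weaker inequality $\Pi+\Area\ge cnL$. That gives $\Area=\Omega(nL)$ and hence the depth bound. It does not give the $1/L$ weighting that \cref{thm:separation} needs to conclude $W\ge cnL-sD/L$.

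Your proposed repair has the inequality reversed: you need $|I(C,Q)|\ge L|Q|$, not $|I(C,Q)|\ge|Q|/L$. The fallback of balancing ``level by level'' has no mechanism behind it either. At the bottom levels, both $|Q|$ and the interval lengths are $O(1)$, so no factor of $L$ can be recovered there.

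\textbf{The missing idea.} Shrink pieces by an extra factor of $L$, then check that enough levels survive. For $|C|=m$, take $a=\lfloor\log m/(16L)\rfloor$. Count only pieces whose roots lie at distance at least $\ell=\lfloor\frac12\log m\rfloor$ from $\operatorname{root}(C)$. The other pieces cover at most $a2^\ell\le a\sqrt m\le m/4$ vertices. For the counted pieces, \cref{eq:final-removal} gives $|I(C,Q)|\ge\ell-1\ge La\ge L|Q|$. Each counted piece thus contributes at least $|Q|$ to $\Pi+\Area/L$.

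The cost is that sizes now drop from $m$ to about $\log m/L$, so it is no longer obvious that there are $\Omega(L)$ levels. To handle this, stop recursing at $M_L=2^{L^2}$. While $\log m>L^2$, one has $|Q|>a/4\ge\log m/(128L)\ge\log\log m$ for large $L$. So each level lowers $\log^*$ of the size by at most two. Meanwhile $\log^*M_L=O(1+\log^*L)=o(L)$. Hence every vertex of $C_0$ still lies in $\Omega(L)$ partitioned subtrees. Bounded $L$ is handled by $\Pi\ge n$.
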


\begin{proof}
We first prove the bound for sufficiently large $L$.
Let $C_0$ be the larger child subtree of $r$.
Its size is $m_0\ge(n-1)/2$.
Recursively partition every piece larger than $M_L=2^{L^2}$.

For a subtree $C$ of size $m>M_L$, choose
\begin{equation}
a=\left\lfloor\frac{\log m}{16L}\right\rfloor, \qquad \ell=\left\lfloor\frac12\log m\right\rfloor.
\end{equation}
By \cref{lem:partition}, each piece satisfies $a/4<|Q|\le a$.
We count pieces whose roots are at distance at least $\ell$
from $\operatorname{root}(C)$.
Fewer than $2^\ell$ vertices lie at smaller distances.
Since each piece contains at most $a$ vertices, the other pieces
cover at most
\begin{equation}
a2^\ell\le a\sqrt m\le m/4
\end{equation}
vertices. The counted pieces therefore cover at least $3m/4$.

For each counted piece $Q$, $\parv(\operatorname{root}(Q))$
lies at least $\ell-1$ edges below $\operatorname{root}(C)$.
By \cref{eq:final-removal},
\begin{equation}
|I(C,Q)|\ge\ell-1\ge La\ge L|Q|.
\end{equation}
By \cref{lem:charging}, either $Q$ requires at least $|Q|$
placements during this interval or it contributes at least
$L|Q|$ to cumulative storage.
Thus these pieces contribute at least $3|C|/4$ to $\Pi+\Area/L$.

We recurse inside every piece, including those omitted
from the current count.
For sufficiently large $L$, each piece satisfies
\begin{equation}
|Q|>a/4\ge\frac{\log m}{128L}\ge\log\log m.
\end{equation}
The last inequality follows from the monotonicity of $x/\log x$
and $\log m>L^2$, which give
\begin{equation}
\frac{\log m}{\log\log m}\ge\frac{L^2}{2\log L}\ge128L.
\end{equation}
Each partition therefore decreases the iterated logarithm
of the piece size by at most two.

Initially, $\log^*m_0=L-O(1)$.
When the recursion stops, the value is at most
$\log^*M_L=O(1+\log^*L)=o(L)$.
Every initial vertex therefore belongs to $\Omega(L)$
subtrees that are partitioned.
By \cref{lem:charging}, no placement or storage contribution
is counted twice. Hence
\begin{equation}
\Pi+\frac{\Area}{L}\ge\frac34\sum_{\text{partitioned }C}|C|=\Omega(m_0L)=\Omega(nL).
\end{equation}

For $L$ below the fixed threshold required by these estimates,
$nL=O(n)$. Thus $\Pi\ge n$ proves the bound after decreasing
$c$ if necessary.
\end{proof}

Because $\Pi\le\Area$, the theorem implies
$\Area=\Omega(n\lambda(n))$, and hence
$D=\Omega(n\lambda(n)/s)$. Combining this with the dependency
bound and \cref{thm:universal-work} gives
\begin{equation}
D=\Omega\!\left(H(T)+\frac ns\max\!\left\{\frac{\log n}{\log(s+1)},\lambda(n)\right\}\right).
\label{eq:universal-depth}
\end{equation}
For $B_h$, $\log n=\Theta(h)$ and
$\lambda(n)=\Theta(\lambda(h))$, so \cref{eq:universal-depth}
matches \cref{thm:parallel-upper}. Together with work optimality
from \cref{thm:universal-work,thm:seq-upper}, this proves \cref{thm:main}.

\subsection{Optimality over tree shapes}
\label{sec:tree-shape}

In this section, we combine the universal lower bounds with
our algorithm for $B_h$ to prove optimality over tree shapes.

\begin{proof}[Proof of \cref{thm:main-shape}]
Fix a space budget $h+1\le s\le2N-1$.
Every ordered full binary tree $T$ with $N=2^{h-1}$ leaves
has $n=2N-1=2^h-1$ vertices and height at least $h$.
Thus $\log n=\Theta(h)$ and
$\lambda(n)=\Theta(\lambda(h))$.

By \cref{thm:universal-work}, every successful sequential
or parallel strategy on $T$ with space at most $s$ has
\begin{equation}
W=\Omega\!\left(\frac{nh}{\log(s+1)}\right).
\end{equation}
By \cref{eq:universal-depth}, every such parallel strategy
also has
\begin{equation}
D=\Omega\!\left(h+\frac ns\max\!\left\{\frac{h}{\log(s+1)},\lambda(h)\right\}\right).
\end{equation}

The complete tree $B_h$ has the same number of leaves and
is feasible at this budget. By \cref{thm:main}, it attains
both bounds up to constant factors, with work and depth
minimized separately. Hence choosing another tree shape
cannot improve the asymptotic order of either quantity.
\end{proof}

Keeping the two terms separate in \cref{eq:weighted} will also let us bound the work of a depth-optimal strategy.

\subsection{Work needed for optimal depth}
\label{sec:separation}

In this section, we show that attaining optimal parallel depth
can require more work than minimizing work alone.
By \cref{eq:weighted,eq:capacity},
\begin{equation}
W\ge\Pi\ge cnL-\frac{\Area}{L}\ge cnL-\frac{sD}{L}, \qquad L=\lambda(n).
\end{equation}
At a fixed space budget, reducing depth therefore increases
this lower bound on work.

Consider $s=\Theta(n/h)$. By
\cref{eq:main-work,eq:main-depth},
\begin{equation}
\Wpar^\star(B_h,s)=\Theta(n), \qquad \Dpar^\star(B_h,s)=\Theta(h\lambda(h)).
\end{equation}
The following theorem shows that attaining this depth
requires $\Theta(n\lambda(h))$ work.

\begin{theorem}[Work at optimal depth]
\label{thm:separation}
For complete trees and budgets $s=\Theta(n/h)$, every family
of strategies with space at most $s$ and depth
$O(h\lambda(h))$ has work $\Omega(n\lambda(h))$.
\Cref{alg:parallel-tree} achieves this bound.
\end{theorem}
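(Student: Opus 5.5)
The plan is to combine the displayed inequality $W\ge\Pi\ge cnL-\sqrt{}$-free form $\Pi\ge cnL-\Area/L$ from \cref{thm:weighted} with the capacity bound $\Area\le sD$ of \cref{eq:capacity}. The catch is that a depth bound $D=O(h\lambda(h))$ holds only up to an unspecified constant. Plugging $s=\Theta(n/h)$ and $D\le C h\lambda(h)$ directly gives $sD/L=O(n)$, with $L=\lambda(n)=\Theta(\lambda(h))$. Hence $\Pi\ge cnL-O(n)$, which is $\Omega(nL)$ only when $L$ exceeds a constant determined by the hidden constants. When $L$ is bounded, the claim $\Omega(n\lambda(h))=\Omega(n)$ follows trivially from $\Pi\ge n$. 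So the argument splits into two cases according to whether $L$ lies above or below a threshold $L_0$ that depends only on $c$, $C$, and the constant in $s=\Theta(n/h)$.

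Concretely, write $s\le \alpha n/h$ and $D\le C h\lambda(h)$. Using $\lambda(h)\le\lambda(n)=L$, we obtain $sD/L\le \alpha C n$. Then $\Pi\ge cnL-\alpha Cn\ge (c/2)nL$ whenever $L\ge 2\alpha C/c$. If instead $L<2\alpha C/c$, use $W\ge\Pi\ge n\ge (c/(2\alpha C))\,nL$. In both cases $W=\Omega(nL)=\Omega(n\lambda(h))$, using $\lambda(n)=\Theta(\lambda(h))$ as noted after \cref{eq:universal-depth}. This is the lower bound, and it holds for every full binary tree, not just complete ones.

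For the upper bound, I will substitute $s=\Theta(n/h)$ into \cref{thm:parallel-upper}. Since $\log(s+1)=\Theta(h)$ here (because $s\ge h+1$ and $n/h\ge 2^{h/2}$ for large $h$), the depth becomes $O(h+h\cdot 1+h\lambda(h))=O(h\lambda(h))$. The work becomes $O(n(1+\lambda(h)))=O(n\lambda(h))$. So \cref{alg:parallel-tree} attains the optimal depth with work $\Theta(n\lambda(h))$.

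The only delicate point is the dependence of the threshold on hidden constants, which the two-case split resolves. I must also check that $\log(s+1)=\Theta(h)$ holds uniformly. For small $h$ everything is $\Theta(1)$ relative to $n$, so the claim is absorbed into constants.
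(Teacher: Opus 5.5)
Your proposal is correct and follows the paper's proof essentially step for step. You combine \cref{thm:weighted} with $\Area\le sD$ to get $W\ge cnL-sD/L\ge cnL-O(n)$, use $\Pi\ge n$ when $L=\lambda(n)$ is below a constant threshold, and obtain the matching construction by substituting $\log(s+1)=\Theta(h)$ into \cref{thm:parallel-upper}; your explicit constants $\alpha$, $C$, and threshold $2\alpha C/c$ just make precise the paper's ``for sufficiently large $L$'' step (the stray fragment in your opening sentence is a typo to remove).
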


\begin{proof}
Set $L=\lambda(n)$. Since $\lambda(h)=\Theta(L)$,
the space and depth bounds give $sD=O(nL)$.
Thus
\begin{equation}
W\ge cnL-\frac{sD}{L}\ge cnL-O(n)=\Omega(nL)
\end{equation}
for sufficiently large $L$.
For bounded $L$, the same bound follows from $W\ge n$.

For the construction, $s=\Theta(n/h)$ gives
$\log(s+1)=\Theta(h)$.
By \cref{thm:parallel-upper}, our algorithm has work
$O(n\lambda(h))$ and depth $O(h\lambda(h))$.
It therefore attains the claimed work bound at optimal depth.
\end{proof}

At some space budgets, our algorithm achieves optimal work
and depth simultaneously. This holds when
$h/\log(s+1)=\Omega(\lambda(h))$, since the additional
$n\lambda(h)$ work does not change the asymptotic work bound.
The minimum budget $s=h+1$ satisfies this condition.
\section{Application to RNS point-addition trees}
\label{sec:cfs-case-study}

In this section, we apply block cleanup to the RNS point-addition
trees of Chevignard, Fouque, and Schrottenloher
(CFS)~\cite{CFS2026} to reduce repeated computations under
the same workspace budget.

For a fixed RNS prime $p$, each intermediate value is a coordinate
triple $P=(X,Y,Z)\in\mathbb F_p^3$. Leaves load precomputed triples
selected by window indices. At a non-root internal vertex,
the circuit combines its two child triples using the projective
point-addition formulas evaluated modulo $p$. Denoting this binary
operation by $\operatorname{Add}_p$, the circuit performs
\begin{equation}
\ket{P_1}\ket{P_2}\ket{0}
\longmapsto
\ket{P_1}\ket{P_2}\ket{\operatorname{Add}_p(P_1,P_2)}.
\label{eq:cfs-add}
\end{equation}
Each stored triple corresponds to a pebble, and uncomputing it
requires both child triples. The root computation updates the
output accumulators, after which all temporary values and pending
measurement phases must be cleared. These dependencies and cleanup
requirements give a spooky pebbling problem.

We adapt block cleanup to the public circuits, choosing boundaries
within the recursive scheme according to their circuit costs
and workspace requirements. At the top of a cleanup block,
the public phase-correction circuit replaces a pebble--unpebble
pair when both child values are available.

We compare block cleanup with the public CFS schedule using the
same tree, input order, arithmetic circuits, root operation,
and workspace budget.
\Cref{tab:cfs-calls} shows the circuit-call counts for P-256
at RNS prime $521177$. The savings come from fewer leaf lookups
and point-addition computations, while the required phase
corrections, internal uncomputations, and root operation
remain unchanged.

\begin{table}[H]
\centering
\small
\begin{tabular}{@{}lrr@{}}
\toprule
Operation & CFS schedule & Block cleanup \\
\midrule
Leaf lookup                         & 82 & 57 \\
Leaf uncomputation                  & 82 & 57 \\
$\operatorname{Add}_p$ computation   & 46 & 35 \\
$\operatorname{Add}_p$ uncomputation &  7 &  7 \\
Direct phase correction             & 10 & 10 \\
Root operation                     &  1 &  1 \\
\bottomrule
\end{tabular}
\caption{Circuit-call counts for P-256 at RNS prime $521177$.}
\label{tab:cfs-calls}
\end{table}

To quantify the circuit savings, we weight each call by its
Toffoli/AND gate count under the public CFS cost model, counting
each Toffoli gate and temporary logical-AND computation as one
unit.\footnote{
The public model assigns zero gate cost to inverse dummy lookups.
The cryptographic-size estimates use the public implementation's
symbolic lookup and arithmetic shortcuts.}
\Cref{tab:cfs-application} reports the total per RNS prime.
Block cleanup reduces this cost by $20.86\%$ for P-224 and
approximately $22.6\%$ for P-256, with unchanged peak qubits.

\begin{table}[H]
\centering
\small
\setlength{\tabcolsep}{3.2pt}
\renewcommand{\arraystretch}{1.15}
\begin{tabular}{@{}lrrrrrc@{}}
\toprule
Curve & Leaves & RNS prime & CFS schedule & Block cleanup
& Reduction & \shortstack{Qubits\\Peak / cap} \\
\midrule
P-224 & 17 & 262139 & 9,306,464 & 7,365,187
& 20.86\% & 1165 / 1165 \\
P-256 & 19 & 521177 & 11,559,064 & 8,939,833
& 22.66\% & 1303 / 1325 \\
P-256 & 19 & 499979 & 11,640,184 & 9,007,225
& 22.62\% & 1303 / 1325 \\
\bottomrule
\end{tabular}
\caption{Toffoli/AND counts per RNS prime. Both schedules have
the reported Qarton peak qubits and satisfy the same conservative
workspace cap including circuit scratch.}
\label{tab:cfs-application}
\end{table}

We checked output values, relative phases, and complete cleanup
using basis and superposition simulations.
Thus fewer recomputations yield lower circuit costs under
the same workspace and cleanup requirements.
Implementation and reproduction details will be provided at
\url{https://github.com/red1108/pebble}.

\section{Discussion}

When space is limited, cleanup is the work bottleneck on complete binary trees. Our key idea is to clean up the connected groups of vertices together to reduce repeated computation. It gives asymptotically optimal work at every feasible space budget. For associative operations, we also proved that a complete balanced tree is asymptotically optimal for work and parallel depth among all full binary trees.

For parallel spooky pebbling, we divide the space budget among
independent jobs and run them concurrently. This gives
asymptotically optimal depth, but each job has less space for
cleanup and may require more recomputation. We also prove
that this extra work is unavoidable at some space budgets.

During the computation, if sharing intermediate values is allowed,
the dependencies can form a general DAG. Our lower-bound argument
relies on each non-root vertex having a single parent, so it does
not directly extend to this setting. Extending our algorithms and
establishing tight work and depth bounds for these graphs remain open.
For complete binary trees, the full work--depth tradeoff at
a fixed space budget also remains open.

\section*{Acknowledgements}
This work was supported by the National Research Foundation of Korea (NRF) through a grant funded by the Ministry of Science and ICT (Grant No. RS-2025-00515537), the Institute for Information and Communications Technology Promotion (IITP) grant funded by the Korean government (MSIP) (Grant Nos. RS-2019-II190003 and RS-2025-02304540), the National Research Council of Science and Technology (NST) (Grant No. GTL25012-401), and the Korea Institute of Science and Technology Information (Grant No. P26028).

\section*{AI disclosure}

The lower-bound proofs were developed through iterative
exchanges with GPT-6 Astra and GPT-6.1 Sol. We also used these models to polish the text, search for applications,
and assist with the code and computational experiments.
The authors are responsible for the proofs, code, and
results presented in this paper.

\printbibliography

@article{Bennett1989,
  author={Bennett, Charles H.},
  title={Time/Space Trade-Offs for Reversible Computation},
  journal={SIAM Journal on Computing},
  volume={18}, number={4}, pages={766--776}, year={1989},
  doi={10.1137/0218053}
}

@inproceedings{Cook1973,
  author={Cook, Stephen A.},
  title={An Observation on Time-Storage Trade Off},
  booktitle={Proceedings of the Fifth Annual ACM Symposium on Theory of Computing},
  pages={29--33}, year={1973}, publisher={ACM}
}

@online{Gidney2019,
  author={Gidney, Craig},
  title={Spooky Pebble Games and Irreversible Uncomputation},
  year={2019},
  url={https://algassert.com/post/1905},
  note={Online technical article}
}

@article{KSS2025,
  author={Kornerup, Niels and Sadun, Jonathan and Soloveichik, David},
  title={Tight Bounds on the Spooky Pebble Game: Recycling Qubits with Measurements},
  journal={Quantum}, volume={9}, pages={1636}, year={2025},
  doi={10.22331/q-2025-02-18-1636},
  eprint={2110.08973}, eprinttype={arxiv}, eprintclass={quant-ph}
}

@article{QL2025,
  author={Quist, Arend-Jan and Laarman, Alfons},
  title={Trade-offs between Classical and Quantum Space Using Spooky Pebbling},
  journal={Logical Methods in Computer Science},
  volume={21}, number={4}, pages={29:1--29:26}, year={2025},
  doi={10.46298/LMCS-21(4:29)2025},
  eprint={2401.10579}, eprinttype={arxiv}
}

@inproceedings{BHL2022,
  author={Blocki, Jeremiah and Holman, Blake and Lee, Seunghoon},
  title={The Parallel Reversible Pebbling Game: Analyzing the Post-Quantum Security of {iMHFs}},
  booktitle={Theory of Cryptography---TCC 2022},
  pages={52--79}, year={2022}, publisher={Springer}
}

@inproceedings{KMRV2026,
  shorthand={KMRV26},
  author={Kahanamoku-Meyer, Gregory D. and Ragavan, Seyoon and Van Kirk, Katherine},
  title={Parallel Spooky Pebbling Makes {Regev} Factoring More Practical},
  booktitle={Advances in Cryptology---EUROCRYPT 2026},
  year={2026}, publisher={Springer},
  doi={10.1007/978-3-032-25291-3_14},
  eprint={2510.08432}, eprinttype={arxiv},
  note={We use arXiv version 3, revised July 3, 2026}
}

@misc{CFS2026,
  author={Chevignard, Cl{\'e}mence and Fouque, Pierre-Alain and Schrottenloher, Andr{\'e}},
  title={Reducing the Number of Qubits in Quantum Discrete Logarithms on Elliptic Curves},
  howpublished={Cryptology {ePrint} Archive, Paper 2026/280},
  year={2026},
  eprint={2026/280}, eprinttype={IACR},
  url={https://eprint.iacr.org/2026/280},
  note={Minor revision of the EUROCRYPT 2026 publication}
}
\end{document}